\newtheorem{theorem}{Theorem}
\newtheorem{remark}{Remark}
\newtheorem{proposition}{Proposition}
\newtheorem{lemma}{Lemma}
\newtheorem{step}{Step}
\def\geqslant {\ge}
\def\leqslant {\le}
\def\bq{\begin{eqnarray}}
\def\eq{\end{eqnarray}}
\def\bqq{\begin{eqnarray*}}
\def\eqq{\end{eqnarray*}}
\def\nn{\nonumber}
\def\B {\mathcal{B}}
\def\R {\mathbb{R}}
\def\C {\mathbb{C}}
\def\A{\mathcal{A}}
\def\K{\mathcal{K}}
\def\d{\text{d}}
\def\eps{\varepsilon}
\title{New bounds on the maximum ionization of atoms}
\author{Phan Th\`anh Nam\\\\
\small Department of Mathematical Sciences, University of Copenhagen,\\
\small Universitetsparken 5, 2100 Copenhagen, Denmark. E-mail: ptnam@math.ku.dk}
\begin{document}

\date{{}}
\maketitle

\begin{abstract} We prove that the maximum number $N_c$ of non-relativistic electrons that a nucleus of charge $Z$ can bind is less than $1.22\/Z+3Z^{1/3}$. This improves Lieb's upper bound $N_c<2Z+1$ [{\it Phys. Rev. A} {\bf 29}, 3018-3028 (1984)] when $Z\ge 6$. Our method also applies to non-relativistic atoms in magnetic field and to pseudo-relativistic atoms. We show that in these cases, under appropriate conditions, $\limsup_{Z\to \infty}N_c/Z \le 1.22$.

{AMS 2010 Subject Classification: 81V45}

{Key words: maximum ionization, Pauli operator, magnetic field, relativistic atoms}
\end{abstract}

\section{Introduction}
Let us consider an atom with a classical nucleus of charge $Z$ and $N$ non-relativistic quantum electrons. The nucleus is fixed at the origin and the $N$-electron system is described by the Hamiltonian
\[
H_{N,Z} = \sum\limits_{i = 1}^N {\left( { - \frac{1}{2}\Delta _i  - \frac{Z}
{{|x_i |}}} \right)}  + \sum\limits_{1 \leqslant i < j \leqslant N} {\frac{1}
{{|x_i  - x_j |}}} 
\]
acting on the antisymmetric space $\mathop  \bigwedge \limits_{i = 1}^N (L^2 (\mathbb{R}^3 ) \otimes \mathbb{C}^2)$. The nuclear charge $Z$ is allowed to be any positive number, although it is an integer in the physical case. 

The ground state energy of $N$ electrons is the bottom of the spectrum of $H_{N,Z}$, 
\[
E(N,Z) = \inf {\text{spec}}~ H_{N,Z}=\mathop {\inf }\limits_{||\psi ||_{L^2}  = 1} (\psi ,H_{N,Z} \psi ).
\]
We say that $N$ electrons can be {\it bound} if $E(N,Z)<E(N-1,Z)$, namely one cannot remove any electron without paying some positive energy. Due to the HVZ theorem (see e.g. \cite{Te09}, Theorem 11.2) which states that
$$\text{ess spec}~H_{N,Z}=[E(N - 1),\infty),$$
one always has $E(N)\le E(N-1)$. Moreover the binding inequality $E(N,Z)<E(N-1,Z)$ means that $E(N,Z)$ is an isolated eigenvalue of $H_{N,Z}$. Zhislin (1960) \cite{Zh60} show that binding occurs provided that $N<Z+1$.  

Of our interest is the maximum number $N_c=N_c(Z)$ of electrons that can be bound. It is a long standing open problem, sometimes referred to as the {\it ionization conjecture} (see e.g. \cite{Li84,So91,So03,LS09}), that $N_c\le Z+1$, or maybe $N_c\le Z+2$. Note that $N_c\ge Z$ due to Zhislin's result. We now briefly present the status of the conjecture, and we refer to \cite{LS09} (Chap. 12) for a pedagogical introduction to this problem.

It was first proved by Ruskai (1982) \cite{Ru82} and Sigal (1982, 1984) \cite{Si82,Si84} that $N_c$ is not too large. In fact,  Ruskai \cite{Ru82} showed that $N_c=O(Z^{6/5})$ as $Z\to \infty$ and Sigal \cite{Si84} showed that $N_c\le 18Z$ and $\limsup_{Z\to \infty}N_c/Z\le 2$. Then Lieb (1984) \cite{Li84} gave a very simple and elegant proof that $N_c<2Z+1$ for all $Z>0$. Lieb's upper bound settles the conjecture for hydrogen but it is around twice of the conjectured bound for large $Z$. 

For large atoms, the asymptotic neutrality $\lim_{Z\to \infty}N_c/Z=1$ was first proved by Lieb, Sigal, Simon and Thirring (1988) \cite{LSST88}. Later, it was improved to $N_c\le Z+O(Z^{5/7})$ by Seco, Sigal and Solovej (1990) \cite{SSS90} and by Fefferman and Seco (1990) \cite{FS90}. The bound $N_c\le Z+\text{const}$, for some $Z$-independent constant, is still unknown, although it holds true for some important approximation models such as Thomas-Fermi and related theories \cite{Li81,BL85} and Hartree-Fock theory \cite{So91, So03}.  

In spite of the asymptotic neutrality, Lieb's upper bound $N_c<2Z+1$ \cite{Li84} is still the best one for realistic atoms (corresponding to the range $1\le Z\le 118$ in the current periodic table). The purpose in this work is to find an improved bound for all $Z>0$. As in \cite{Li84}, we do not need the binding inequality; more precisely, that $E(N,Z)$ is an eigenvalue of $H_{N,Z}$ is sufficient for our analysis. One of our main result is the following. 
   
\begin{theorem}[Bound on maximum ionization of non-relativistic atoms] \label{thm:main-theorem} Let $Z>0$ (not necessarily an integer). If $E(N,Z)$ is an eigenvalue of $H_{N,Z}$, then either $N=1$ or 
$$N<1.22\/Z+3Z^{1/3}.$$
The factor $1.22$ can be replaced by $\beta^{-1}$ with $\beta$ being defined by (\ref{eq:def-beta}) below.
\end{theorem}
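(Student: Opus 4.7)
The plan is to adapt and sharpen Lieb's ``multiplication by $|x|$'' argument \cite{Li84}. Let $\psi$ be a normalized eigenfunction with $H_{N,Z}\psi=E(N,Z)\psi$ and $N\ge 2$. The real part of $\sum_{i=1}^N\langle\psi,w(x_i)(H_{N,Z}-E(N,Z))\psi\rangle=0$ for a one-body weight $w:\R^3\to[0,\infty)$, after symmetrising the kinetic commutators $[-\tfrac{1}{2}\Delta_i,w(x_i)]$ and using $E(N,Z)\le 0$ to discard the non-positive energy term, yields a Lieb-type inequality
\[
  Z\sum_{i=1}^N\Big\langle\psi,\tfrac{w(x_i)}{|x_i|}\psi\Big\rangle
  \;\ge\;
  \sum_{1\le i<j\le N}\Big\langle\psi,\tfrac{w(x_i)+w(x_j)}{|x_i-x_j|}\psi\Big\rangle
  \;-\;(\text{kinetic correction}).
\]
With $w(x)=|x|$ and the pointwise triangle bound $w(x_i)+w(x_j)\ge|x_i-x_j|$ this reduces to $NZ\ge N(N-1)/2$ and Lieb's $N<2Z+1$.

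To improve Lieb's coefficient $2$ to the claimed $\beta^{-1}\approx 1.22$ one must prove a sharper \emph{global} lower bound of the form
\[
  \sum_{1\le i<j\le N}\Big\langle\psi,\tfrac{|x_i|+|x_j|}{|x_i-x_j|}\psi\Big\rangle
  \;\ge\;\beta\,N(N-1)-O(NZ^{1/3}),\qquad \beta\approx 0.82.
\]
Such an estimate cannot hold purely geometrically (e.g.\ taking one electron to infinity saturates Lieb's bound with coefficient $\tfrac12$), so it must be derived against the eigenfunction $\psi$, exploiting the fact that the binding energy and the kinetic term confine the electrons to a neighbourhood of the nucleus at the Thomas--Fermi scale $Z^{-1/3}$. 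The natural route is to dominate the two-body expectation from below by a positive-definite functional of the one-body density $\rho_\psi$ of $\psi$ (using positivity of the Coulomb kernel and antisymmetry) and then to solve the resulting Benguria--Lieb-type \cite{BL85} variational problem, whose solution is a Hartree-type minimiser; the sharp constant $\beta^{-1}$ would then be defined by this variational problem, equation~(\ref{eq:def-beta}). The numerical value $\beta^{-1}\approx 1.22$ is strikingly close to the asymptotic bosonic ionization ratio of Benguria--Lieb, hinting at a common Hartree origin. The smoothing of $\rho_\psi$ at the length scale $Z^{-1/3}$ needed to pass from the continuous to the discrete setting produces the $O(NZ^{1/3})$ correction.

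Combining the two displayed inequalities gives $NZ\ge\beta N(N-1)-O(NZ^{1/3})$, hence $N-1\le\beta^{-1}Z+O(Z^{1/3})$, and finally $N<\beta^{-1}Z+3Z^{1/3}$ after bookkeeping of constants. The main technical obstacle I foresee is establishing the sharp constant $\beta^{-1}$ via the reduction to a one-body Hartree-type problem and the associated rearrangement/convexity estimates: one must identify the correct variational functional, prove it is saturated by a radial Hartree minimiser, and extract the numerical value $\approx 1.22$. A secondary obstacle is to sharpen the kinetic correction so that the remainder is bounded explicitly by $3Z^{1/3}$ rather than an abstract $O(Z^{1/3})$; this requires a careful choice of the Thomas--Fermi smoothing length together with a sharp commutator estimate for $[-\tfrac{1}{2}\Delta_i,w(x_i)]$.
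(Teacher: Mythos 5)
You have correctly located the obstruction --- that Lieb's two-body estimate with $w(x)=|x|$, namely $\frac{|x_i|+|x_j|}{|x_i-x_j|}\ge 1$, has sharp constant $\tfrac12$ (configurations with one electron escaping to infinity along a ray saturate it) --- but the way you propose to circumvent it is not what the paper does, and I do not think it can work as stated. Your plan is to keep $w=|x|$ and upgrade the constant from $\tfrac12$ to $\beta$ \emph{against the eigenfunction}, using Thomas--Fermi-scale confinement, antisymmetry and Coulomb positivity. The difficulty is that the kernel $\frac{|x|+|y|}{|x-y|}$ is homogeneous of degree zero, so knowing the electrons live at scale $Z^{-1/3}$ does not by itself change its expectation; scale invariance also means that dominating it from below by a functional of $\rho_\psi$ at the TF scale cannot generate a constant strictly above $\tfrac12$ without injecting additional, non-scale-invariant structure. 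That is precisely the information that is absent from your sketch.

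The paper's crucial and single new idea is to multiply the eigenvalue equation by $x_N^2\overline{\Psi}$ rather than $|x_N|\overline{\Psi}$. This replaces the two-body kernel by $\frac{x_i^2+x_j^2}{|x_i-x_j|}$, and replaces the normalization $N$ by $(N-1)\sum_i|x_i|$. The resulting ratio
\[
\alpha_N=\inf_{x_1,\dots,x_N}\frac{\sum_{i<j}\frac{|x_i|^2+|x_j|^2}{|x_i-x_j|}}{(N-1)\sum_i|x_i|}
\]
is a \emph{purely classical, configuration-space} quantity --- no wave function, no density, no Hartree problem --- which (Proposition~\ref{pro:alphaN}) is increasing in $N$ and converges to $\beta$ of (\ref{eq:def-beta}). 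In other words, the improvement from $\tfrac12$ to $\beta\approx 0.82$ is geometric, not quantum-mechanical. The eigenfunction only enters through the \emph{kinetic} correction: $\mathrm{Re}\langle x^2f,-\Delta f\rangle\ge -\tfrac34\|f\|^2$ by Hardy's inequality, which must be divided by $\langle\Psi,|x_N|\Psi\rangle$, and it is \emph{here} that Pauli exclusion and the Lieb--Thirring inequality are used (Lemma~\ref{le:lemma1}, giving $\langle\Psi,|x_N|\Psi\rangle>0.553\,Z^{-1}N^{2/3}$). This yields Lemma~\ref{le:Z>alphaN-N}: $\alpha_N(N-1)<Z(1+0.68\,N^{-2/3})$, which together with the lower bound on $\alpha_N$ gives the theorem. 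Your error budget $N<\beta^{-1}Z+O(Z^{1/3})$ happens to match, but for a different reason: in the paper both the $\alpha_N-\beta$ discrepancy and the kinetic correction are $O(Z^{1/3})$ once $N\sim Z$.

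Finally, a conceptual correction: $\beta$ as defined in (\ref{eq:def-beta}) is the statistical limit of the classical $N$-body optimization $\alpha_N$ (its proof via uniform surface measures and Newton's theorem has no Hartree content), and its numerical proximity to the Benguria--Lieb bosonic ratio $t_c\approx1.21$ is a coincidence remarked on in the paper, not evidence of a shared variational origin. If you wish to pursue a route closer in spirit to yours --- extracting an improved two-body constant from properties of $\psi$ --- the relevant precedent is \cite{LSST88}, whose Theorem~3.1 the paper borrows in a different role (Step~\ref{st:thm3-st3}), but the mechanism there is geometric localization, not Hartree reduction, and it produces asymptotic neutrality rather than an explicit non-asymptotic bound.
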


\begin{remark} The bound $1.22\/Z+3Z^{1/3}$ is less than Lieb's bound $2Z+1$ when $Z\ge 6$.  
\end{remark}

\begin{remark} While Lieb's result holds true for both fermions and bosons, our result only holds for fermions (in fact, our method works also for the bosonic case but it yields an estimate worse than Lieb's). Note that the ionization conjecture only concerns fermions since for bosonic atoms it was shown that $\lim_{Z\to \infty}N_c/Z =t_c \approx 1.21$ by Benguria and Lieb \cite{BL83} and Solovej \cite{So90} (the numerical value $1.21$ is taken from \cite{Ba84}). In our proof below, we use Pauli's exclusion principle in Lemma \ref{le:lemma1}. More precisely, we use the fact that in a fermionic atom the average distance from the electrons to the nucleus of charge $Z$ is (at least) of order $Z^{-1/3}$. In contrast, the corresponding distance in the bosonic atoms is of order $Z^{-1}$.
\end{remark}

\begin{remark} Although Lieb's method \cite{Li84} can be generalized to molecules, we have not yet been able to adapt our method to this case. 
\end{remark}

Our method also applies to other models such as non-relativistic atoms in magnetic fields and relativistic atoms, and we shall discuss these extensions later. In the rest of the introduction let us outline the proof of Theorem \ref{thm:main-theorem}. As a first step we get the following bound. 

\begin{lemma}\label{le:Z>alphaN-N} If $E(N,Z)$ is an eigenvalue of $H_{N,Z}$ then we have
$$
\alpha _{N}(N-1)< Z(1+0.68~N^{-2/3}),
$$
where
\bq \label{eq:def-alphaN}
\alpha _N : = \mathop {\inf }\limits_{x_1 ,...,x_N  \in \mathbb{R}^3 } \frac{{\sum\limits_{1 \leqslant i < j \leqslant N} {\frac{{|x_i |^2  + |x_j |^2 }}
{{|x_i  - x_j |}}} }}
{{(N - 1)\sum\limits_{i = 1}^N {|x_i |} }}.
\eq
\end{lemma}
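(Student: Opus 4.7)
The plan is to derive the averaged Coulomb inequality
\begin{equation*}
\int \sum_{i<j}\frac{|x_i|^2+|x_j|^2}{|x_i-x_j|}|\psi|^2\,dx\;\le\;Z\bigl(1+0.68\,N^{-2/3}\bigr)\int \sum_i |x_i||\psi|^2\,dx,
\end{equation*}
after which the lemma follows immediately by the pointwise inequality $\alpha_N(N-1)\sum_i|x_i|\le\sum_{i<j}(|x_i|^2+|x_j|^2)/|x_i-x_j|$ from (\ref{eq:def-alphaN}): integrate against $|\psi|^2$ and divide by $\int\sum_i|x_i||\psi|^2>0$. Strictness will come from a strictly positive kinetic contribution that is dropped along the way.

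The main identity is obtained by testing $(H_{N,Z}-E(N,Z))\psi=0$ against the trial function $\phi:=\bigl(\sum_{m=1}^N|x_m|^2\bigr)\psi$, which belongs to the form domain of $H_{N,Z}$ by the Agmon-type exponential decay of bound state eigenfunctions, and then taking the real part of $\langle\phi,(H_{N,Z}-E(N,Z))\psi\rangle=0$. Integrating by parts (using $\Delta_j|x_m|^2=6\delta_{jm}$ and $\mathrm{Re}(\bar\psi\nabla_j\psi)=\tfrac12\nabla_j|\psi|^2$) turns the kinetic contribution into $\tfrac12\int(\sum_m|x_m|^2)\sum_j|\nabla_j\psi|^2\,dx-\tfrac{3N}{2}$. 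Splitting $\sum_m|x_m|^2$ as $(|x_k|^2+|x_l|^2)+\sum_{m\ne k,l}|x_m|^2$ in the electron repulsion and as $|x_j|+\sum_{m\ne j}|x_m|^2/|x_j|$ in the nuclear attraction, the ``diagonal'' parts produce exactly the target $T:=\int\sum_{i<j}(|x_i|^2+|x_j|^2)/|x_i-x_j||\psi|^2\,dx$ on the repulsion side and $Z\bar R:=Z\int\sum_j|x_j||\psi|^2\,dx$ on the attraction side. The off-diagonal remainder is $\int\sum_m|x_m|^2\bigl(-V^{(m)}\bigr)|\psi|^2\,dx$, where $V^{(m)}$ is the potential of the $(N{-}1)$-body sub-Hamiltonian $H^{(m)}$ obtained by removing the $m$-th electron; the operator inequality $H^{(m)}\ge E(N-1,Z)\ge E(N,Z)$ controls it, and after invoking the pointwise identity $\int|x_j|^2|\nabla_j\psi|^2\,dx=\int|\nabla_j(|x_j|\psi)|^2\,dx+2$, a clean cancellation of the unweighted kinetic yields
\begin{equation*}
T\;\le\;Z\bar R\;+\;\tfrac{N}{2}\;-\;\tfrac12\sum_j\int\bigl|\nabla_j(|x_j|\psi)\bigr|^2\,dx.
\end{equation*}

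The remaining task, and the main obstacle, is to absorb the quantity $N/2-\tfrac12\sum_j\int|\nabla_j(|x_j|\psi)|^2\,dx$ into $0.68\,ZN^{-2/3}\bar R$. This is where Pauli's exclusion principle enters, via the Lieb--Thirring bound $\int\rho^{5/3}\,dx\le C_{LT}^{-1}\langle\psi,\sum_j(-\Delta_j)\psi\rangle$ on the one-body density $\rho$ of the fermionic state, combined with a H\"older-type level-set lower bound on $\bar R=\int|x|\rho(x)\,dx$ in terms of $N$ and $\int\rho^{5/3}$. Optimizing the resulting estimate produces the numerical constant $0.68$ (equivalently $\beta^{-1}\approx 1.22$ in the Theorem) and the characteristic fermionic scaling $N^{-2/3}$; this optimization is the technical heart of the improvement over Lieb's bound $N<2Z+1$.
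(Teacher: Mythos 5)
Your approach is essentially the paper's. The paper tests the eigenvalue equation against $x_N^2\overline{\Psi}_{N,Z}$ and then symmetrizes using the symmetry of $|\Psi_{N,Z}|^2$; you test against $\bigl(\sum_m|x_m|^2\bigr)\overline\psi$, which amounts to the same thing after the symmetrization is undone, and you arrive at exactly the same intermediate inequality $T\le Z\bar R+\tfrac{N}{2}-\tfrac12\sum_j\int|\nabla_j(|x_j|\psi)|^2$. The final ingredient, a lower bound on $\bar R=\int\sum_j|x_j||\psi|^2$ of order $Z^{-1}N^{5/3}$ via the Lieb--Thirring inequality together with Lieb's density inequality $\bigl(\int\rho^{5/3}\bigr)^{1/2}\int|x|\rho\ge C_1N^{11/6}$ and the Bohr-atom energy lower bound through the Virial theorem, is precisely the paper's Lemma~\ref{le:lemma1}.

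One caution, though: you state that ``strictness will come from a strictly positive kinetic contribution that is dropped along the way.'' If you literally drop the nonnegative term $\tfrac12\sum_j\int|\nabla_j(|x_j|\psi)|^2$, you are left needing $\tfrac{N}{2}\le 0.68\,ZN^{-2/3}\bar R$, i.e.\ $\bar R\ge (2\cdot 0.68)^{-1}Z^{-1}N^{5/3}\approx 0.74\,Z^{-1}N^{5/3}$, which the density bound of Lemma~\ref{le:lemma1} ($\bar R>0.553\,Z^{-1}N^{5/3}$) does not deliver. You must instead retain Hardy's inequality $\int|\nabla_j(|x_j|\psi)|^2\ge\tfrac14\int|\psi|^2$, which is precisely the content of the paper's identity (\ref{eq:xxDelta>=-3/4}), $\operatorname{Re}\langle x^2f,-\Delta f\rangle=\langle f,(|x|(-\Delta)|x|-1)f\rangle\ge-\tfrac34\langle f,f\rangle$. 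This sharpens $\tfrac{N}{2}$ down to $\tfrac{3N}{8}$, and then $\tfrac{3N}{8}<0.68\,ZN^{-2/3}\bar R$ follows since $3/(8\cdot0.553)\approx0.678<0.68$. Make the Hardy step explicit in your ``absorption'' of the kinetic remainder; it is where the factor $3/4$ rather than $1$ enters, and without it the constant $0.68$ (equivalently $\beta^{-1}\approx1.22$) is out of reach.
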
 

This result is proved by modifying Lieb's proof: in \cite{Li84} Lieb multiplied the eigenvalue equation $(H_{N,Z}-E(N,Z))\Psi_{N,Z}=0$ by $|x_N|\overline{\Psi}_{N,Z}$. We instead multiply by $x_N^2\overline{\Psi}_{N,Z}$ and employ the Lieb-Thirring inequality to control error terms. 

Roughly speaking, the number $\alpha_N^{-1}$ yields an upper bound on $N/Z$. This bound improves previous results since $\alpha_N$ is bigger than $1/2$ (one can see that $\alpha_2=1/2$ and $\alpha_N\ge \sqrt{5}/4 =0.559...$ when $N\ge 3$). Although we do not know the exact value of $\alpha_N$, it is possible to derive some effective estimates. We may think of $\alpha_N$ as the lowest energy of $N$ classical particles acting on $\R^3$ via the potential $V(x,y)=\frac{x^2+y^2}{|x-y|}$, under some normalizing condition. It is natural to believe that if $N$ becomes large, then $\alpha_N$ converges to the statistical limit
\bq \label{eq:def-beta}
\beta : = \inf \left\{ {\frac{{\iint\limits_{\mathbb{R}^3  \times \mathbb{R}^3 } {\frac{{x^2+y^2}}
{{2|x - y|}} \operatorname{d\rho} (x)\operatorname{d\rho} (y)}}}
{{\int\limits_{\mathbb{R}^3 } {|x|  \operatorname{d\rho} (x)} } }:\rho~\text{a probability measure on}~\R^3} \right\}.
\eq
Results of this form in bounded domain have already appeared in \cite{MS82}. Indeed, we can show that $\alpha_N$ actually converges to $\beta$ and provide an explicit estimate on the convergence rate. Theorem \ref{thm:main-theorem} essentially follows by inserting the lower bound on $\alpha_N$ in Proposition \ref{pro:alphaN} below into the inequality in Lemma \ref{le:Z>alphaN-N}.

\begin{proposition} \label{pro:alphaN} The sequence $\{\alpha_N\}_{N=2}^\infty$ is increasing and for any $N\ge 2$ we have
\[
\beta  \geqslant \alpha _N  \geqslant \frac{N}{N-1}[\beta  - 3(\beta /6)^{1/3} N^{ - 2/3}],
\]
with $\beta$ being defined by (\ref{eq:def-beta}). Moreover, $\beta\in [0.8218,0.8705)$.
\end{proposition}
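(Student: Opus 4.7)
For monotonicity $\alpha_N \le \alpha_{N+1}$, I take any $N+1$ points $x_1,\dots,x_{N+1}$ and apply the definition of $\alpha_N$ to each of the $N+1$ subconfigurations obtained by deleting one point. Summing the resulting inequalities, each pair $(i,j)$ appears $N-1$ times in the numerator (once for every deletion index $k\notin\{i,j\}$) while every $|x_i|$ appears $(N-1)N$ times in the denominator, which collapses to $\sum_{i<j}(|x_i|^2+|x_j|^2)/|x_i-x_j|\ge \alpha_N\cdot N\sum_i|x_i|$, exactly the statement that the $\alpha_{N+1}$ quotient is at least $\alpha_N$. For the upper bound $\alpha_N\le \beta$, I fix a smooth compactly supported probability measure $\rho$ with $J(\rho)/L(\rho)<\beta+\varepsilon$ (writing $J(\rho)=\iint\frac{|x|^2+|y|^2}{2|x-y|}d\rho\,d\rho$ and $L(\rho)=\int|x|d\rho$) and sample $M$ i.i.d.\ points from it; since the kernel is $\rho\otimes\rho$-integrable, the law of large numbers for $U$-statistics yields for $M$ large a configuration whose quotient is below $J(\rho)/L(\rho)+\varepsilon$, so $\alpha_M\le \beta+2\varepsilon$. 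Monotonicity then propagates this back to $\alpha_N$, and letting $\varepsilon\downarrow 0$ gives $\alpha_N\le \beta$.

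The core is the quantitative lower bound. Starting from a near-minimizer $(x_i)_{i=1}^N$ of $\alpha_N$, I smear each point by forming the uniform probability measure $\mu_i$ on the ball $B_{R_i}(x_i)$ (with radii to be chosen) and set $\rho=N^{-1}\sum_i\mu_i$. Jensen's inequality applied to $x\mapsto|x|$, together with $\int x\,d\mu_i=x_i$, gives $\int|x|d\rho\ge N^{-1}\sum_i|x_i|$. For $J(\rho)=N^{-2}\sum_{i,j}\iint\frac{|x|^2+|y|^2}{2|x-y|}d\mu_id\mu_j$, Newton's theorem evaluates each term when the balls are disjoint: the off-diagonal contribution equals $\frac{|x_i|^2+|x_j|^2}{2|x_i-x_j|}$ up to an error of order $[R_i(R_i+|x_i|)+R_j(R_j+|x_j|)]/|x_i-x_j|$, and the diagonal self-energy equals $\frac{6|x_i|^2}{5R_i}+\frac{24R_i}{35}$ by a direct Gauss-law computation inside a uniformly charged ball. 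Plugging into $J(\rho)\ge \beta L(\rho)$, dividing by $(N-1)\sum_i|x_i|$, and writing $L=\sum_i|x_i|$, I obtain a bound of the form $\alpha_N\ge \frac{N}{N-1}\bigl[\beta - c_1 L^{-1}\sum_i|x_i|^2/R_i - c_2 L^{-1}\sum_i R_i - (\text{off-diag err})\bigr]$ with explicit constants $c_1=6/5$, $c_2=24/35$. Choosing $R_i$ proportional to $|x_i|\cdot N^{-1/3}$ balances the two diagonal pieces against each other and yields the $N^{-2/3}$ rate; the explicit constant $3(\beta/6)^{1/3}$ drops out of this optimisation. The delicate point is clustered configurations: there I would cap each $R_i$ at a fraction of the nearest-neighbour distance so the balls remain disjoint, and verify that the off-diagonal error remains a small multiple of $\sum_{i<j}(|x_i|^2+|x_j|^2)/|x_i-x_j|$ so it can be absorbed onto the left-hand side.

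For the numerical bounds on $\beta$ I reduce to a one-dimensional problem. The averaging identity $\int_{S_r\times S_s}|x-y|^{-1}d\sigma\,d\sigma=(4\pi)^2 r^2 s^2/\max(r,s)$ together with a symmetric decreasing rearrangement reduces $\beta$ to $\inf_g J_g/L_g$, where $J_g=\iint\frac{r^2+s^2}{2\max(r,s)}g(r)g(s)\,dr\,ds$ and $L_g=\int r g(r)\,dr$, with the infimum taken over probability densities $g$ on $[0,\infty)$. The upper bound $\beta<0.8705$ follows by inserting an explicit trial (for instance a Gamma density $g(r)=r^ne^{-r}/n!$ already gives $7/8$ at $n=1$, and a slightly optimised variant pushes the value below $0.8705$). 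For the lower bound $\beta\ge 0.8218$, rewriting $J_g=L_g-\tfrac12\mathbb{E}\bigl[R_{(1)}(1-R_{(1)}/R_{(2)})\bigr]$, where $R_{(1)}\le R_{(2)}$ are the order statistics of two i.i.d.\ samples from $g$, reduces the task to bounding this expectation from above relative to $L_g$; the estimate must use the sharp radial kernel $1/\max(r,s)$, since the crude triangle bound $|x-y|\le|x|+|y|$ only yields $\beta\ge 1/2$. The main obstacle throughout is the lower bound on $\alpha_N$: preserving the $N^{-2/3}$ rate for arbitrary (and possibly clustered) configurations requires both the variable smoothing radii and a clean absorption of the off-diagonal error into the leading term.
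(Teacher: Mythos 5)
Your monotonicity argument (delete one index at a time and sum) is exactly the paper's, and your upper bound $\alpha_N\le\beta$ via i.i.d.\ sampling and the law of large numbers for $U$-statistics is a valid alternative to the paper's cleaner route of simply integrating the inequality $\sum_{i<j}\frac{x_i^2+x_j^2}{|x_i-x_j|}\ge\alpha_N(N-1)\sum_i|x_i|$ against $\rho^{\otimes N}$. The real question is the quantitative lower bound, and there your proposal has a genuine gap.

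The paper does not smear the points onto balls; it smears onto \emph{spheres} $\{|x-x_i|=r_i\}$ with $r_i=r|x_i|$. This choice is what makes the argument close. With spheres, Newton's theorem gives the exact potential $\int\frac{d\mu_i(x)}{|x-x_j|}=\min\{|x_i-x_j|^{-1},r_i^{-1}\}\le|x_i-x_j|^{-1}$ with no restriction on whether the spheres overlap, and the cross term $V_{ij}=\int\frac{2x_i\cdot(x-x_i)}{|x-x_j|}\,d\mu_i(x)$ has the explicit closed form
\[
V_{ij}=-\frac{2}{3}\,\frac{r_i\,x_i\cdot(x_i-x_j)\,\min\{|x_i-x_j|,r_i\}}{|x_i-x_j|\,(\max\{|x_i-x_j|,r_i\})^2},
\]
after which a three-case check shows $V_{ij}+V_{ji}\le0$ for \emph{every} configuration (the paper's one-line claim ``$\ge0$'' above the case analysis is a typo; the cases all prove $\le0$, which is the sign needed). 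This is precisely the step that handles clustered points, and it is the step you flag as ``the main obstacle'' without resolving. With uniform balls, your off-diagonal error $\bigl[R_i(R_i+|x_i|)+R_j(R_j+|x_j|)\bigr]/|x_i-x_j|$ does blow up when $|x_i-x_j|\to0$, and capping $R_i$ at a fraction of the nearest-neighbour distance destroys the uniform scaling $R_i\propto|x_i|N^{-1/3}$; you would then have to show the capped-radius losses are still $O(N^{-2/3})$ uniformly, which is exactly the hard part. Two secondary points: your Jensen bound $\int|x|\,d\rho\ge N^{-1}\sum|x_i|$ throws away the factor $(1+r^2/3)$ that the paper computes exactly and needs to get the stated constant $3(\beta/6)^{1/3}$; and the optimisation is not a balance of the two diagonal self-energy pieces $\frac{6|x_i|^2}{5R_i}$ and $\frac{24R_i}{35}$ against each other (those differ by $N^{2/3}$ at your scaling), but a balance of the self-energy against the multiplicative loss from smearing, which is what produces $r\sim N^{-1/3}$.

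For the numerics on $\beta$, your claim that symmetric decreasing rearrangement ``reduces $\beta$ to $\inf_g J_g/L_g$'' over radial densities is not justified and is not what the paper proves: the paper only shows $\beta\le\beta_{\mathrm{rad}}$ by restricting the trial class (for the upper bound), and proves the lower bound $\beta\ge0.8218$ by an entirely different route, namely the two integral inequalities of Lemma~\ref{le:lower-bound-(xx+yy)/d} (one from the $\varepsilon$-clustering result of Lieb--Sigal--Simon--Thirring, the other from positive-definiteness of the kernel $\frac{x\cdot y}{|x-y|}$ under rotational averaging), combined via a convex mixture $W_\lambda$ and an AM--GM optimisation in Lemma~\ref{le:lemma7}. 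Because the weight $x^2+y^2$ is increasing while $1/|x-y|$ is rearrangement-increasing, it is not clear that the $\beta$-functional decreases under rearrangement, so your reduction needs a proof before the order-statistics computation can be trusted.
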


\begin{remark} We do not know the exact numerical value of $\beta$, but our bound that $\beta\in [0.82188,0.8705)$ is already rather precise. There is of course still room for improvement. 
\end{remark}

The article is organized as follows. We shall prove Theorem \ref{thm:main-theorem} in Section 2. Then we discuss some possible extensions of our method in Section 3. Proposition \ref{pro:alphaN} is of independent interest and we defer its proof to Section 4.  
\text{}\\\\
{\bf Acknowledgments.} I am grateful to Mathieu Lewin and Jan Philip Solovej for very helpful discussions, and to Rupert L. Frank and Elliott H. Lieb for pointing out the lower bound (\ref{eq:LT-lower-bound-ground-state-energy}) which improves the constants in Lemma \ref{le:lemma1} and Theorem \ref{thm:main-theorem}. I thank the referee for his constructive comments. This work was done when I was a visiting student at D\'epartement de Math\'ematiques, Universit\'e de Cergy-Pontoise (France), and I wish to thank the people there for the warm hospitality.

\section{Proof of Theorem \ref{thm:main-theorem}: the new bound}

\subsection{Lieb's method} In order to make our argument transparent we start by quickly recalling the proof of Lieb \cite{Li84}.  Assume that $E(N,Z)$ is an eigenvalue of $H_{N,Z}$ corresponding to some normalized eigenfunction $\Psi_N$. Multiplying the Schr\"odinger equation 
\bq \label{eq:Schrodinger-equation}
(H_{N,Z}-E(N,Z))\Psi_{N,Z}=0
\eq
by $|x_N|\overline{\Psi}_{N,Z}$  and then integrating, one gets
\bq \label{eq:0=xuHu=T1+T2+T3}
  0 &=& \left< {|x_N |\Psi_{N,Z} ,(H_{N,Z}  - E(N,Z) )\Psi_{N,Z} } \right>\nn \hfill \\
   &=& \left< {|x_N |\Psi_{N,Z} ,(H_{N - 1,Z}  - E(N,Z) )\Psi_{N,Z} } \right>+ \frac{1}{2}\left< {|x_N |\Psi_{N,Z} , - \Delta _N \Psi_{N,Z} } \right>\nn\hfill\\
&~& + \left< {\Psi_{N,Z} ,\left[ { - Z + \sum\limits_{i = 1}^{N - 1} {\frac{{|x_N |}}
{{|x_i  - x_N |}}} } \right]\Psi_{N,Z} } \right>.
\eq

The first term in the right hand side of (\ref{eq:0=xuHu=T1+T2+T3}) is non-negative since $H_{N-1,Z}\ge E(N-1,Z)\ge E(N,Z)$ (in the space of $N-1$ particles $x_1$,...,$x_{N-1}$). The second term is also non-negative due to the inequality
\bq \label{eq:xfAf>=0}
\operatorname{Re} \left< {|x|f, - \Delta f} \right>\ge 0~~\text{for all}~f\in H^1(\R^3).
\eq
 
Thus the third term in (\ref{eq:0=xuHu=T1+T2+T3}) must be non-positive. Using the symmetry of $|\Psi_{N,Z}|^2$ (which holds true for both fermions and bosons) we can rewrite it as 
\[
\left\langle {\Psi _{N,Z} ,\left( { - Z + \frac{1}
{N}\sum\limits_{1 \leqslant i < j \leqslant N} {\frac{{|x_i | + |x_j |}}
{{|x_i  - x_j |}}} } \right)\Psi _{N,Z} } \right\rangle \le 0.
\]
It follows from the triangle inequality that
\bq \label{eq:triangular-ineq}
\frac{1}
{{N(N - 1)}}\sum\limits_{1 \leqslant i < j \leqslant N} {\frac{{|x_i | + |x_j |}}
{{|x_i  - x_j |}}}  \geqslant \frac{1}
{2}.
\eq
Hence we obtain $-ZN+\frac{N(N-1)}{2}<0$, namely $N<2Z+1$. The inequality is strict since the triangle inequality is strict almost everywhere in $(\R^{3})^N$. Note that the lower bound $1/2$ in (\ref{eq:triangular-ineq}) is sharp (when $|x_i|\ll |x_j|$ if $i<j$). 

\subsection{Proof of Lemma \ref{le:Z>alphaN-N}}

Instead of multiplying the equation (\ref{eq:Schrodinger-equation}) by $|x_N|\overline{\Psi}_{N,Z}$, we now  multiply by $x_N^2\overline{\Psi}_{N,Z}$ and integrate. We obtain
\bqq
  0 &=& \left\langle {x_N^2 \Psi _{N,Z} ,(H_{N - 1,Z}  - E_{N,Z} )\Psi _{N,Z} } \right\rangle  + \frac{1}
{2}\left\langle {x_N^2 \Psi _{N,Z} , - \Delta_N \Psi _{N,Z} } \right\rangle  \hfill \\
   &~&~~+ \left\langle {\Psi _{N,Z} ,\left( { - Z|x_N | + \frac{1}{N}\sum\limits_{1 \leqslant i < j \leqslant N} {\frac{{x_i^2  + x_j^2 }}
{{|x_i  - x_j |}}} } \right)\Psi _{N,Z} } \right\rangle  \hfill \\
   &\geqslant& \frac{1}
{2}\left\langle {x_N^2 \Psi _{N,Z} , - \Delta_N \Psi _{N,Z} } \right\rangle  + \left\langle {\Psi _{N,Z} ,\left( { - Z + \alpha _N (N - 1)} \right)|x_N |\Psi _{N,Z} } \right\rangle.
\eqq
Recall that $\alpha_N$ is defined in (\ref{eq:def-alphaN}). This implies that
\bq \label{eq:0=xxuHu>=}
\alpha _N (N - 1) \leqslant Z - \frac{1}
{2}\left\langle {x_N^2 \Psi _{N,Z} , - \Delta_N \Psi _{N,Z} } \right\rangle \left\langle {\Psi _{N,Z} ,|x_N |\Psi _{N,Z} } \right\rangle ^{ - 1} .
\eq

As we will see, the main advantage of our method is that the number $\alpha_N$ is bigger than $1/2$ when $N\ge 3$. However, we do not have an inequality similar to (\ref{eq:xfAf>=0}) with $|x|$ replaced by $x^2$. In fact, for all $f\in H^1(\R^3)$, applying the identity
\bq \label{eq:identity-xxT}
\operatorname{Re} \left\langle {\varphi f, - \Delta f} \right\rangle  = \left\langle {\varphi ^{1/2} f,\left( { - \Delta  - \left| {\frac{{\nabla \varphi }}
{{2\varphi }}} \right|^2 } \right)\varphi ^{1/2} f} \right\rangle 
\eq
to $\varphi(x)=x^2$ we find that  
\bq \label{eq:xxDelta>=-3/4}
\text{Re}\left\langle { x^2 f, - \Delta f} \right\rangle  = \left\langle {f, (|x|(-\Delta)|x|-1) f} \right\rangle   \ge   - \frac{3}{4}\left\langle {f,f} \right\rangle 
\eq
by Hardy's inequality, $-3/4$ being the sharp constant. 

Our observation is that we may still control the second term in the right hand side of (\ref{eq:0=xxuHu>=}) since $\left\langle {\Psi _{N,Z} ,|x_N |\Psi _{N,Z} } \right\rangle^{-1}$ is small (in comparison with $Z$) . In fact, $\left\langle {\Psi _{N,Z} ,|x_N |\Psi _{N,Z} } \right\rangle$ can be understood as the average distance from $N$ electrons to the nucleus, which is well-known to be (at least) of order $Z^{-1/3}$. We have the following explicit bound.

\begin{lemma}\label{le:lemma1} If $\Psi_{N,Z}$ is a ground state of $H_{N,Z}$ then   
$$\left\langle {\Psi _{N,Z} ,|x_N |\Psi _{N,Z} } \right\rangle >0.553~Z^{-1}N^{2/3}.$$
\end{lemma}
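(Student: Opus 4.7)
By permutation-symmetry of the fermionic density $|\Psi_{N,Z}|^2$, one has
$$\langle \Psi_{N,Z},|x_N|\Psi_{N,Z}\rangle = \frac{1}{N}\int_{\R^3}|x|\rho(x)\,dx = \frac{M}{N},$$
where $\rho$ is the one-particle density of $\Psi_{N,Z}$ (normalized so $\int\rho=N$) and $M:=\int|x|\rho\,dx$. The lemma is thus equivalent to $M > 0.553\,N^{5/3}Z^{-1}$. The plan is to combine a fermionic upper bound on the kinetic energy with a Hölder/Markov decomposition of $\int\rho$ to extract a lower bound on $M$.

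First I would upper-bound the kinetic energy $T := \tfrac12\sum_i\langle\Psi_{N,Z},-\Delta_i\Psi_{N,Z}\rangle$. Since $\Psi_{N,Z}$ is a ground state of a Coulombic Hamiltonian, the virial theorem gives $T=-E(N,Z)$, and the Lieb-Thirring-type lower bound on atomic ground state energies referenced as (\ref{eq:LT-lower-bound-ground-state-energy}) in the acknowledgments — in the sharper form communicated by Frank and Lieb — yields $E(N,Z)\ge -c\,Z^2N^{1/3}$ with an explicit constant. Inserting this into the kinetic Lieb-Thirring inequality $T\ge K\int\rho^{5/3}$ (with $K$ the best available fermionic constant, including the spin-$\tfrac12$ factor) produces the key estimate
$$\int\rho^{5/3}\,dx \le c_1\,Z^2N^{1/3}$$
for an explicit $c_1$.

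Next, for any $R>0$ I would decompose
$$N=\int\rho = \int_{|x|<R}\rho + \int_{|x|\ge R}\rho \le |B_R|^{2/5}\Big(\int\rho^{5/3}\Big)^{3/5} + \frac{M}{R},$$
using Hölder's inequality on the bulk and Markov's inequality $\int_{|x|\ge R}\rho\le R^{-1}\int|x|\rho = M/R$ on the tail. Inserting the bound on $\int\rho^{5/3}$ and optimizing over $R$ — the simple choice $R = 2M/N$ equalizes the two contributions up to a harmless factor and already gives the right exponents — produces an inequality of the form $N^{11/5}\lesssim M^{6/5}Z^{6/5}c_1^{3/5}$, which rearranges to $M\ge c_0\,N^{5/3}/Z$ with $c_0$ an explicit constant.

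The main obstacle is quantitative rather than conceptual: to reach the stated threshold $c_0>0.553$ one must simultaneously track the best available kinetic Lieb-Thirring constant and the sharpened ground state energy bound of Frank and Lieb, since with weaker versions of either input one still obtains the correct scaling $M\gtrsim N^{5/3}/Z$ but with a smaller prefactor. The Hölder-Markov decomposition and the $R$-optimization themselves are routine.
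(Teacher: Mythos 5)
Your first half (permutation symmetry to reduce to the one-particle density, the virial theorem, the Bohr-atom energy lower bound $E(N,Z)\ge -AZ^2N^{1/3}$, and the Lieb--Thirring kinetic inequality to get $\int\rho^{5/3}\,dx\le K^{-1}AZ^2N^{1/3}$) is exactly the paper's argument. The divergence is in the final step, and it is precisely there that the proposal fails to reach the stated constant.

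Instead of H\"older on a ball plus Markov, the paper applies a sharp interpolation inequality from Lieb (1976):
\[
\Big(\int\varphi^{5/3}\Big)^{1/2}\int|x|\,\varphi\,dx \ \ge\ C_1\Big(\int\varphi\Big)^{11/6},
\qquad C_1=\pi^{-1/3}\,2^{-1}\,3^{5/3}\,5^{5/6}\,7^{1/3}\,11^{-3/2}\approx 0.4271,
\]
whose constant is attained by $\varphi(x)=(1-|x|)_+^{3/2}$. Your decomposition
\[
N \le \Big(\tfrac{4\pi}{3}R^3\Big)^{2/5}\Big(\int\rho^{5/3}\Big)^{3/5}+\frac{M}{R},
\]
optimized over $R$, produces the \emph{same functional inequality} but with the strictly inferior constant
\[
\tilde C_1=\Big(\tfrac{5}{11}\Big)^{11/6}\Big(\tfrac{4\pi}{3}\Big)^{-1/3}\cdot\tfrac{6}{5}\approx 0.175,
\]
about $2.4$ times smaller than $C_1$. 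Tracing this through the rest of the computation (the prefactor $(K/A)^{1/2}\approx 1.30$, with $A=(3^{1/3}/2)2^{2/3}$ and the Dolbeault--Laptev--Loss value of $L$ in $K$) gives $\langle\Psi,|x_N|\Psi\rangle\gtrsim 0.23\,Z^{-1}N^{2/3}$, well short of $0.553$. You attribute the potential shortfall to the kinetic Lieb--Thirring constant or the ground-state energy bound, but those are already the inputs the paper uses; the entire loss is in replacing the sharp weighted $L^{5/3}$ inequality by the H\"older--Markov truncation. The scaling and the mechanism are right, but the numerical threshold $0.553$ genuinely requires the sharp inequality with extremal $(1-|x|)_+^{3/2}$, and the truncation argument cannot be tuned to recover it.
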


It follows from (\ref{eq:xxDelta>=-3/4}) and Lemma \ref{le:lemma1} that
\[
\frac{1}
{2}\left\langle {x_N^2 \Psi _{N,Z} , - \Delta \Psi _{N,Z} } \right\rangle \left\langle {\Psi _{N,Z} ,|x_N |\Psi _{N,Z} } \right\rangle ^{ - 1}  >  - 0.68~ZN^{ - 2/3} .
\]
Substituting the latter estimate into (\ref{eq:0=xxuHu>=}) we obtain the inequality in Lemma \ref{le:Z>alphaN-N}. We now provide the

\begin{proof}[Proof of Lemma \ref{le:lemma1}] The following proof essentially follows from \cite{LS09} ( p. 132). Note that
$$\left\langle {\Psi _{N,Z} ,|x_N |\Psi _{N,Z} } \right\rangle=\frac{1}{N}\int_{\R^3}{|x|\rho_{\Psi_{N,Z}}(x)\d x}$$
where the density $\rho_{\Psi_{N,Z}}$ of $\Psi_{N,Z}$ is defined by
\[
\rho _{\Psi _{N,Z} } (x): = N\sum_{\sigma_1=1,2}...\sum_{\sigma_N=1,2} \int\limits_{\mathbb{R}^{3(N - 1)} } {|\Psi_{N,Z} (x,\sigma_1;x_2,\sigma_2;...;x_N,\sigma_N)|^2\d x_2...\d x_N } .
\]

By solving the Bohr atom as in \cite{Li76} (after eq. (40) p. 560) one has the lower bound on the ground state energy
\bq \label{eq:LT-lower-bound-ground-state-energy}
E(N,Z) \geqslant \left\langle {\Psi _{N,Z} ,\sum\limits_{i = 1}^N {\left( { - \frac{1}
{2}\Delta _i  - \frac{Z}
{{|x_i|}}} \right)\Psi _{N,Z} } } \right\rangle  \geqslant  - AZ^2 N^{1/3} 
\eq
where $A=(3^{1/3}/2)2^{2/3}$. Moreover, one has the Lieb-Thirring kinetic energy inequality \cite{LT75} 
\bq \label{eq:LT-lower-bound-kinetic}
\K_{\Psi_{N,Z}}:=\frac{1}{2}\sum\limits_{i = 1}^N {\left\langle {\Psi_{N,Z} , - \Delta _i \Psi_{N,Z}} \right\rangle }\ge K\int\limits_{\mathbb{R}^3 } {\rho _{\Psi_{N,Z}}  (x)^{5/3} dx},
\eq
where $K = 2^{-2/3}(3/10)\left( {2/(5L)} \right)^{2/3}$ with $L=(\pi3^{3/2} 5)^{-1}= 0.01225...$ (this constant $L$ is taken from \cite{DLL08}). Since $E(N,Z)=-\K_{\Psi_{N,Z}}$ by the Virial Theorem, we get from (\ref{eq:LT-lower-bound-ground-state-energy}) and (\ref{eq:LT-lower-bound-kinetic}) that
\bq \label{eq:rho5/3-upper-bound}
\int\limits_{\mathbb{R}^3 } {\rho _{\Psi_{N,Z}}  (x)^{5/3} \d x}  \leqslant K^{-1}AZ^2 N^{1/3}.
\eq

On the other hand, we have the following inequality introduced by Lieb (\cite{Li76}, p. 563)
\[
\left( {\int\limits_{\mathbb{R}^3 } {\varphi (x)^{5/3} \d x} } \right)^{p/2} \left( {\int\limits_{\mathbb{R}^3 } {|x|^p \varphi (x)\d x} } \right) \geqslant C_p \left( {\int\limits_{\mathbb{R}^3 } {\varphi (x)\d x} } \right)^{1 + 5p/6} 
\]
for any nonnegative measurable function $\varphi (x)$, with the sharp constant $C_p$ being attained for $\varphi(x)=(1-|x|^p)^{3/2}_+$. In particular, applying this inequality to $\varphi(x)=\rho_{\Psi_{N,Z}}(x)$ and $p=1$, we get
\bq \label{eq:|x|rho-lower-bound}
\left( {\int\limits_{\mathbb{R}^3 } {\rho _{\Psi_{N,Z} } (x)^{5/3} \d x} } \right)^{ 1/2} \int\limits_{\mathbb{R}^3 } {|x|\rho _{\Psi_{N,Z} } (x)\d x}  \geqslant C_1 N^{11/6} 
\eq
where $C_1=\pi^{-1/3}2^{-1}3^{5/3}5^{5/6}7^{1/3}11^{-3/2}= 0.4271...$  Combining (\ref{eq:rho5/3-upper-bound}) and (\ref{eq:|x|rho-lower-bound}) we obtain the desired inequality.  
\end{proof}

\subsection{Proof of Theorem \ref{thm:main-theorem}}

Let us admit Proposition \ref{pro:alphaN} for the moment and derive Theorem \ref{thm:main-theorem}. Lemma \ref{le:Z>alphaN-N} and Proposition \ref{pro:alphaN} together yield a lower bound on $Z$ in terms of $N$, 
\bq \label{eq:main-theorem-lower-bound-Z}
\frac{{N(\beta  - 3(\beta /6)^{1/3} N^{ - 2/3} )}}
{{1 + 0.68~N^{ - 2/3} }} < Z.
\eq

It is just an elementary calculation to translate (\ref{eq:main-theorem-lower-bound-Z}) into an upper bound on $N$ in terms of $Z$. If $\min\{N,Z\}<3$, then $\max\{2,\beta^{-1}Z+3Z^{1/3}\}>2Z+1$ (since $\beta<0.8705$), and hence our bound follows from Lieb's bound $N<2Z+1$. If $\min\{N,Z\}\ge 3$, then $\beta  - 3(\beta /6)^{1/3} N^{ - 2/3}>0$ and Lieb's bound implies that $N/Z< 2+Z^{-1}\le  7/3$. Thus the desired result follows from (\ref{eq:main-theorem-lower-bound-Z}) and the following technical lemma whose proof is provided in the Appendix. 

\begin{lemma}\label{le:lemma3} For $Z>0$, $N>0$, $N/Z<7/3$ and $\beta\ge 0.8218$ one has  
$$
\beta^{-1}Z+3Z^{1/3} > \min \left\{ {N,Z \frac{1+0.68~N^{-2/3}}{\beta-3(\beta /6)^{1/3}N^{-2/3}}} \right\}.
$$
\end{lemma}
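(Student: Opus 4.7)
Write $M := \beta^{-1}Z + 3Z^{1/3}$ for the target upper bound and $f(N) := Z(1 + 0.68\, N^{-2/3})/(\beta - 3(\beta/6)^{1/3}N^{-2/3})$ for the second entry of the minimum, so the claim reads $M > \min\{N, f(N)\}$. If $N < M$ there is nothing to prove, and if the denominator of $f$ is non-positive then $f(N) \le 0 < M$ and we are likewise done, so we may assume $N \ge M$ and $\beta - 3(\beta/6)^{1/3}N^{-2/3} > 0$. In this regime $f$ is manifestly decreasing in $N$ (numerator decreases, denominator increases), hence $f(N) \le f(M)$, and it suffices to prove the single strict inequality $f(M) < M$.

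\textbf{Reduction to a one-variable inequality.} Clearing the positive denominator in $f(M) < M$ and using the identity $M\beta = Z + 3\beta Z^{1/3}$ built into the definition of $M$ cancels a $Z$ from both sides and reduces the desired bound to
$$0.68\, Z\, M^{-2/3} + 3(\beta/6)^{1/3}\, M^{1/3} < 3\beta\, Z^{1/3}.$$
Dividing by $Z^{1/3}$ and introducing the dimensionless ratio $u := M/Z = \beta^{-1} + 3 Z^{-2/3}$ collapses this to the scale-invariant statement
$$g(u) := 0.68\, u^{-2/3} + 3(\beta/6)^{1/3}\, u^{1/3} < 3\beta.$$

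\textbf{Where the hypotheses bite, and the main obstacle.} The assumption $N/Z < 7/3$ now does essential work: combined with $N \ge M$ it gives $u = M/Z \le N/Z < 7/3$, and together with the trivial lower bound $u \ge \beta^{-1}$ we get $u \in [\beta^{-1}, 7/3)$. A direct derivative calculation shows $g'(u)$ vanishes only at $u_0 = (2\cdot 0.68)/(3(\beta/6)^{1/3})$, which for $\beta \ge 0.8218$ lies strictly below $\beta^{-1}$, so $g$ is increasing on the relevant interval and the bound reduces to $g(7/3) < 3\beta$. A similar derivative check in $\beta$ shows $\beta \mapsto g(7/3) - 3\beta$ is decreasing on $[0.8218,\infty)$, so only the single numerical evaluation at $\beta = 0.8218$ remains. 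Explicit computation yields $g(7/3) \approx 2.44$ against $3\cdot 0.8218 \approx 2.466$, a margin of only about $0.03$. This tightness, together with the precise appearance of $7/3$ (coming from Lieb's bound $N < 2Z+1 \le 7Z/3$ for $Z \ge 3$) and the lower bound $\beta \ge 0.8218$ from Proposition \ref{pro:alphaN}, is the delicate point of the proof and essentially the only nontrivial step beyond the algebraic reductions; any weakening of either numerical input would break the argument at this final endpoint check.
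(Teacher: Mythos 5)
Your proposal reaches the right conclusion by a genuinely different route. The paper argues by contradiction: assuming the inequality fails, it adds the non-negative term $\beta_1 N^{-2/3}(N/Z-\beta^{-1}-3Z^{-2/3})$ (with $\beta_1=3(\beta/6)^{1/3}$) to eliminate the separate $Z$-dependence, derives $h\bigl((N/Z)^{1/3}\bigr)\ge 0$ for the cubic $h(x)=0.68-3\beta x^2+\beta_1 x^3$, and then shows the resulting sign pattern of $h$ at $-\infty,0,\beta^{-1/3},(N/Z)^{1/3},(7/3)^{1/3},+\infty$ forces more than three roots. Your version instead uses monotonicity of $f$ to pass from $N$ to $M$ and then monotonicity of the scale-free function $g$ to reduce to the endpoint $u=7/3$. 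After the substitution $x=u^{1/3}$, your target $g(u)<3\beta$ is literally $h(x)<0$, so both proofs turn on the negativity of the same cubic on the same interval; the paper establishes this by root-counting, you by a monotonicity-of-$g$ argument, and both correctly trace the appearance of $7/3$ and of $\beta\ge 0.8218$ to the delicate endpoint check.

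Two steps in your write-up need more justification. First, the inference ``$f$ is decreasing, hence $f(N)\le f(M)$'' requires the denominator $\beta-\beta_1 M^{-2/3}$ to be positive (so that $f$ is defined and decreasing on all of $[M,N]$), but you only assumed positivity at $N$. The denominator is positive precisely for arguments exceeding $(\beta_1/\beta)^{3/2}=\sqrt{4.5}/\beta$, and $M\le\sqrt{4.5}/\beta$ is not a priori excluded by your setup. It is in fact vacuous under the lemma's hypotheses (from $N>\sqrt{4.5}/\beta$ and $N<7Z/3$ one gets a lower bound on $Z$ and hence on $M$ that forces $M>\sqrt{4.5}/\beta$), but this deserves a line; the paper sidesteps the issue entirely by keeping $N$ as the variable, where positivity of the denominator is already assumed. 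Second, the claim that the critical point $u_0=(2\cdot 0.68)/(3(\beta/6)^{1/3})$ lies below $\beta^{-1}$ ``for $\beta\ge 0.8218$'' is false once $\beta$ exceeds roughly $1.34$ (solve $1.36\beta=3(\beta/6)^{1/3}$). This is harmless in practice since Proposition~\ref{pro:alphaN} pins $\beta<0.8705$, but the lemma's hypothesis is only $\beta\ge 0.8218$; to cover all such $\beta$ you should also verify the left endpoint, $g(\beta^{-1})=0.68\beta^{2/3}+3\cdot 6^{-1/3}<3\beta$, which is elementary.
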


\section{Some possible extensions}

\subsection{Atoms in magnetic fields}

In this section, we consider the ionization problem with the presence of a magnetic field. The system is now described by the Hamiltonian
\[
H_{N,Z,\A}  = \sum\limits_{i = 1}^N {\left( {T_\A^{(i)}  - \frac{Z}
{{|x_i |}}} \right)}  + \sum\limits_{1 \leqslant i < j \leqslant N} {\frac{1}
{{|x_i  - x_j |}}} 
\]
acting on the fermionic space $\mathop  \bigwedge \limits^N (L^2 (\mathbb{R}^3 ) \otimes \mathbb{C}^2)$. The kinetic operator is the Pauli operator
\[
T_\A  =\left| {\sigma  \cdot ( - i\nabla  + \A(x))} \right|^2  =( - i\nabla  + \A(x))^2 +\sigma  \cdot \B ,
\]
where $\A$ is the magnetic potential, $\B=\text{curl}(\A)$ is the magnetic field and $\sigma=(\sigma^1,\sigma^2,\sigma^3)$ are the Pauli matrices
\[
\sigma ^1  = \left( {\begin{array}{*{20}c}
   0 & 1  \\
   1 & 0  \\

 \end{array} } \right),\sigma ^2  = \left( {\begin{array}{*{20}c}
   0 & { - i}  \\
   i & 0  \\

 \end{array} } \right),\sigma ^3  = \left( {\begin{array}{*{20}c}
   1 & 0  \\
   0 & { - 1}  \\

 \end{array} } \right).
\]

For simplicity we shall always assume that $\A\in L^4_{\text{loc}}(\R^3,\R^3)$, $\nabla  \cdot \A\in L^2_{\text{loc}}(\R^3)$ and $|\B|\in L^{3/2}(\R^3)+L^{\infty}(\R^3)$. Under these assumptions, it is well known that $( - i\nabla  + \A(x))^2$ is essentially self-adjoint on $L^2(\R^3)$ with the core $C^\infty_c(\R^3)$ \cite{LS81}, and $|\B| + Z/|x|$ is infinitesimally bounded with respect to $( - i\nabla  + \A(x))^2$ (see e.g. \cite{Se01}). In particular, the ground state energy  
$$E(N,Z,\B)=\text{inf spec}~H_{N,Z,\A}$$
is finite. We shall also assume that $N\mapsto E(N,Z,\B)$ is non-increasing (for example, this is the case if $\B=(0,0,B)$ is a constant magnetic field \cite{LSY94I}). Note that the ground state energy depends on $\A$ only through $\B$ by {\it gauge invariance} (see e.g. \cite{LS09} p. 21).   

Of our interest is the maximum number $N_c$ such that $E(N_c,Z,\B)$ is an eigenvalue of $H_{N,Z,\A}$. 
Seiringer \cite{Se01} showed in 2001 that
\bq \label{eq:Seiringer-1}
N_c  < 2Z + 1 +\frac{1}{2} \frac{{E(N_c ,Z,\B) - E(N_c ,kZ,\B)}}
{{N_c Z(k - 1)}}
\eq
for all $k>1$. In the homogeneous case, $\B=(0,0,B)$, his bound yields
\bq \label{eq:Seiringer-2}
N_c  < 2Z + 1 + C_1 Z^{1/3}  + C_2 Z\min \left\{ {(B/Z^3 )^{2/5} ,1+|\ln (B/Z^3 )|^{2} } \right\}.
\eq
In particular, in the semiclassical regime $\lim_{Z\to \infty}(B/Z^3)=0$, Seiringer's bound implies that 
\[
\mathop {\lim \sup }\limits_{Z \to \infty } \frac{{N_c }}
{Z} \leqslant 2.
\]
In contrast, it was shown by Lieb, Solovej and Yngvason (1994) \cite{LSY94I} that if $\lim_{Z\to \infty}(B/Z^3)=\infty$, then 
\[
\mathop {\lim \inf }\limits_{Z \to \infty } \frac{{{N_c }}}
{Z} \geqslant 2.
\]

We shall improve these upper bounds using the method in the previous section. Our result in this section is as follows.

\begin{theorem}[Bounds on maximum ionization of atoms in magnetic fields]\label{thm:bound-magnetic-1} Assume that $\A\in L^4_{\text{loc}}(\R^3,\R^3)$, $\nabla  \cdot \A\in L^2_{\text{loc}}(\R^3)$ and $|\B|=| {\rm curl}(\A)|\in L^{3/2}(\R^3)+L^{\infty}(\R^3)$. For any $Z>0$, denote by $N_c=N_c(Z)$ the maximum number such that $E(N_c,Z,\B)$ is an eigenvalue of $H_{N,Z,\A}$. Then we have, for every $k>1$, 
\bq \label{eq:bound-magnetic-1}
N_c  < (1.22Z + 3Z^{1/3} )\left( {1 +\frac{{E(N_c ,Z,\B) - E(N_c ,kZ,\B)}}
{{N_c Z^2(k - 1)}}} \right).
\eq

If $\B=(0,0,B)$ is a constant magnetic field, then 
\bqq \label{eq:bound-magnetic-2}
N_c < (1.22~Z + 3Z^{1/3} )  \left( {1 + 11.8~Z^{ - 2/3} + \min \left\{ {0.42\left( {B/Z^3 } \right)^{2/5} ,C(1 + |\ln (B/Z^3 )|^2 )} \right\}} \right)
\eqq
for some universal constant $C$ (independent of $Z$ and $B$). In particular, if $\lim_{Z\to \infty}(B/Z^3)=0$, then
\[
\mathop {\limsup}\limits_{Z \to \infty } \frac{{N_c }}
{Z} \leqslant 1.22.
\]
The number $1.22$ in all bounds can be replaced by $\beta^{-1}$ with $\beta$ being defined by (\ref{eq:def-beta}).
\end{theorem}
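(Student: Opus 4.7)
The strategy is to redo the proof of Section 2 with the Pauli operator $T_\A$ in place of $-\Delta_N$, and to replace the Lieb--Thirring-based estimate of Lemma \ref{le:lemma1}---which relied on the Virial theorem and the Bohr-atom bound $E(N,Z)\ge -AZ^{2}N^{1/3}$---by an energy-comparison estimate in the spirit of Seiringer \cite{Se01}. Multiplying $(H_{N_c,Z,\A}-E(N_c,Z,\B))\Psi=0$ by $x_N^2\overline{\Psi}$, integrating, and symmetrizing as in (\ref{eq:0=xxuHu>=}), the hypothesis that $N\mapsto E(N,Z,\B)$ is non-increasing ensures $\langle x_N^2\Psi,(H_{N-1,Z,\A}-E(N_c,Z,\B))\Psi\rangle\ge 0$, so the analogue of the key inequality of Section 2 reduces to
\[
\alpha_{N_c}(N_c-1)\le Z-\frac{\operatorname{Re}\langle x_N^2\Psi,T_\A^{(N)}\Psi\rangle}{2\langle\Psi,|x_N|\Psi\rangle}.
\]

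The first technical step is to establish the magnetic analogue of (\ref{eq:xxDelta>=-3/4}). Applying the identity (\ref{eq:identity-xxT}) with $-\Delta$ replaced by $(-i\nabla+\A)^2$ (the derivation goes through because integration by parts only produces gradients of $\varphi=x^2$, not of $\A$), together with the diamagnetic inequality and Hardy, yields $\operatorname{Re}\langle x^2 f,(-i\nabla+\A)^2 f\rangle\ge -\tfrac34\|f\|^2$. For the full Pauli operator a direct computation shows that adding the $\sigma\cdot\B$ term on both sides produces the clean identity $\operatorname{Re}\langle x^2 f,T_\A f\rangle=\langle|x|f,T_\A|x|f\rangle-\|f\|^2$, which is bounded below by $-\|f\|^2$ thanks to $T_\A=|\sigma\cdot(-i\nabla+\A)|^2\ge 0$.

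The second and more delicate step is to bound $\langle\Psi,|x_N|\Psi\rangle^{-1}$ without recourse to the Virial theorem. Here I would use that $Z\mapsto E(N_c,Z,\B)$ is concave (as an infimum of affine functions of $Z$), together with the Hellmann--Feynman identity $-\partial_Z E(N_c,Z,\B)=N_c\langle\Psi,|x_N|^{-1}\Psi\rangle$, to deduce for every $k>1$ the secant estimate
\[
N_c\langle\Psi,|x_N|^{-1}\Psi\rangle\le\frac{E(N_c,Z,\B)-E(N_c,kZ,\B)}{(k-1)Z}.
\]
Combining with the Cauchy--Schwarz inequality $\langle\Psi,|x_N|\Psi\rangle^{-1}\le\langle\Psi,|x_N|^{-1}\Psi\rangle$ and inserting into the inequality from the first paragraph produces the magnetic analogue of Lemma \ref{le:Z>alphaN-N}; Proposition \ref{pro:alphaN} then converts $\alpha_{N_c}$ into $\beta$ with an $N_c^{-2/3}$ correction, and the algebraic manipulation used in the proof of Theorem \ref{thm:main-theorem} yields the multiplicative bound (\ref{eq:bound-magnetic-1}).

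For the constant-field case $\B=(0,0,B)$, the task reduces to estimating the energy difference $E(N_c,Z,\B)-E(N_c,kZ,\B)$ in terms of $B/Z^3$ and optimizing over $k>1$. The $Z$-independent piece $11.8\,Z^{-2/3}$ arises from the purely Coulombic Thomas--Fermi energy $E(N,Z,0)\asymp -Z^{7/3}$, while in the semiclassical regime $B\ll Z^3$ magnetic Thomas--Fermi asymptotics supply the $(B/Z^3)^{2/5}$ contribution and in the high-field regime the Lieb--Solovej--Yngvason bounds of \cite{LSY94I} produce the logarithmic term $C(1+|\ln(B/Z^3)|^2)$. I expect the main obstacle to be the second step: in the magnetic setting the clean scaling $\langle|x_N|\rangle\gtrsim Z^{-1}N^{2/3}$ of Lemma \ref{le:lemma1} is unavailable, so the looseness of the Hellmann--Feynman/concavity estimate must be absorbed into a multiplicative rather than additive error, and transcribing the constant-field asymptotics into the explicit numerical constants of the statement requires rather careful bookkeeping.
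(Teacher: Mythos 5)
Your proposal follows the paper's proof essentially step by step: multiply by $x_N^2\overline\Psi$, use the monotonicity hypothesis on $N\mapsto E(N,Z,\B)$ to drop the $H_{N-1}$ term, control the kinetic error via the identity (\ref{eq:identity-xxT}) applied to $T_\A$ and positivity of the Pauli operator, and replace Lemma \ref{le:lemma1} by the Cauchy--Schwarz/variational comparison $\langle\Psi,|x_N|\Psi\rangle^{-1}\le\langle\Psi,|x_N|^{-1}\Psi\rangle\le (E(N,Z,\B)-E(N,kZ,\B))/(NZ(k-1))$ --- which is exactly the paper's (\ref{eq:estimate-E-E}) --- before quoting the Lieb--Solovej--Yngvason energy bound in the constant-field case. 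Two minor slips: the Hamiltonian has $T_\A$ (not $\tfrac12 T_\A$) as kinetic term, so the factor $2$ in your displayed inequality should be absent (this only costs you a harmless loss of a constant, since you then use $\operatorname{Re}\langle x^2 f,T_\A f\rangle\ge-\|f\|^2$); and the algebraic conversion from $\alpha_{N_c}$ to the multiplicative form (\ref{eq:bound-magnetic-1}) actually invokes a separate technical statement (Lemma \ref{le:lemma4}, tailored to the case $N\ge\beta^{-1}Z+3Z^{-2/3}$) rather than the same Lemma~\ref{le:lemma3} used in Theorem~\ref{thm:main-theorem}, but the mechanism is as you describe.
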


\begin{proof} Assume that $\Psi_{N,Z,\A}$ is a ground state of $H_{N,Z,\A}$. Following the proof of Lemma \ref{le:Z>alphaN-N}, we have 
\bq \label{eq:magnetic-0=xxuHu>=}
\alpha _N (N - 1) \leqslant Z - \left\langle {x_N^2 \Psi _{N,Z,\A} , T_\A \Psi _{N,Z,\A} } \right\rangle \left\langle {\Psi _{N,Z,\A} ,|x_N |\Psi _{N,Z,\A} } \right\rangle ^{ - 1} ,
\eq
which is the analogue of (\ref{eq:0=xxuHu>=}). 

We may assume that $N\ge \beta^{-1}Z+3Z^{-2/3}$ (otherwise we are done). In this case the left hand side of (\ref{eq:magnetic-0=xxuHu>=}) can be bound by
\bq \label{eq:magnetic-alphaN}
{\alpha_N(N-1)}>\frac{N}{\beta ^{ - 1}  + 3Z^{ - 2/3}}.
\eq 
This estimate follows from the lower bound on $\alpha_N$ in Proposition \ref{pro:alphaN} and the following technical lemma whose proof is provided in the Appendix.

\begin{lemma}\label{le:lemma4} For $Z>0$, $N\in \mathbb{N}$, $N\ge \beta^{-1}Z+3Z^{-2/3}$ and $\beta\ge 0.8218$, one has 
\[
(\beta  - 3(\beta/6)^{1/3} N^{ - 2/3} )(\beta ^{ - 1}  + 3Z^{ - 2/3} ) > 1.
\]
\end{lemma}

The second term in the right hand side of (\ref{eq:magnetic-0=xxuHu>=}) can be bound in the same way as in \cite{Se01}. More precisely, using (\ref{eq:identity-xxT}) with $-\Delta$ replaced by $T_\A\ge 0$, one has 
\bq \label{eq:magnetic-lower-bound-xxT}
\left\langle {x_N^2 \Psi_{N,Z,\A},T_\A \Psi_{N,Z,\A}} \right\rangle  = \left\langle {\Psi_{N,Z,\A},\left( {|x_N|T_\A|x_N|  -1} \right)\Psi_{N,Z,\A}} \right\rangle  \geqslant  - 1.
\eq
On the other hand, for every $k>1$, 
\bq \label{eq:estimate-E-E}
{{\left\langle {\Psi _{N,Z,\A} ,|x_N |\Psi _{N,Z,\A} } \right\rangle }}^{-1} &\leqslant&  \left\langle {\Psi _{N,Z,\A} ,|x_N |^{ - 1} \Psi _{N,Z,\A} } \right\rangle \nn\hfill\\
&=&\frac{{\left\langle {\Psi _{N,Z,\A} ,H_{N,Z,\A} \Psi _{N,Z,\A} } \right\rangle  - \left\langle {\Psi _{N,Z,\A} ,H_{N,kZ,\A} \Psi _{N,Z,\A} } \right\rangle }}
{{NZ(k - 1)}} \nn\hfill\\
&\leqslant& \frac{{E(N,Z,\B) - E(N,kZ,\B)}}
{{NZ(k - 1)}}
\eq
since $\Psi _{N,Z,\A}$ is a ground state of $H_{N,Z,\A}$. Then (\ref{eq:bound-magnetic-1}) follows by substituting (\ref{eq:magnetic-alphaN}), (\ref{eq:magnetic-lower-bound-xxT}) and (\ref{eq:estimate-E-E}) into (\ref{eq:magnetic-0=xxuHu>=}). 

Now assume that $\B=(0,0,B)$ is a constant magnetic field. It follows from \cite{LSY94II} (Theorems 2.4, 2.5) that if $N\ge Z/2$, then the ground state energy $E(N,Z,B):=E(N,Z,\B)$ can be bounded from below by  
\bq \label{eq:magnetic-lower-bound-ground-state-energy}
E(N,Z,B) \ge  - NZ^2 \left( {18.7Z^{ - 2/3} +  \min \left\{ {0.95\left( {B/Z^3 } \right)^{2/5} ,C\left( {1 + |\ln (B/Z^3 )|^2 } \right)} \right\}} \right)
\eq

for some universal constant $C$ (independent of $N$, $Z$ and $B$). (It is obtained when applying (2.27), (2.26), (2.29) in \cite{LSY94II} to the cases: $B< Z^{4/3}$, $B\ge Z^{4/3}$, $B\gg Z^3$, respectively.) 

We can choose $k=2$ in (\ref{eq:bound-magnetic-1}). Then the desired bound follows by using the upper bound $E(N,Z,B)\le 0$ and the lower bound on $E(N,2Z,B)$ derived from  (\ref{eq:magnetic-lower-bound-ground-state-energy}). 
\end{proof}

\begin{remark}\label{remark:remark5} We may also consider the Hamiltonian $H_{N,Z,\A}$ on the bosonic space $\mathop \otimes \limits_{\text{sym}}^N (L^2 (\mathbb{R}^3 ) \otimes \mathbb{C}^q)$, where $q$ is a spin number. In this case the inequality (\ref{eq:bound-magnetic-1}) still holds true. Moreover, if $\B=(0,0,B)$ is a constant magnetic field, then using the estimate \cite{Se01} (p. 1948) 
\[
E(N,Z,B) = NZ^2 E(1,1,B/Z^2 ) \geqslant  - \frac{1}
{4}NZ^2 \min \left\{ {1 + 4B/Z^2 ,C|\ln (B/Z^2 )|^2 } \right\}
\]
we get from (\ref{eq:bound-magnetic-1}) that
\[
N_c  < (\beta ^{ - 1} Z + 3Z^{1/3} )\left( {1 + \min \left\{ {1+4B/Z^2 ,C_2 |\ln (B/Z^2 )|^2 } \right\} } \right).
\]
In particular, if $\lim_{Z\to \infty} (B/Z^2)= 0$, then our bound yields 
$$\limsup_{Z\to \infty} \frac{N_c}{Z} \le 2\beta^{-1} \le 2.44.$$
It slightly improves the bosonic bound in \cite{Se01}, which gives $\limsup_{Z\to \infty} (N_c/Z) \le 2.5.$  
\end{remark}


\subsection{Pseudo-relativistic atoms}

In this section we consider the pseudo-relativistic Hamiltonian
\[
H_{N,Z}^{\text{rel}}  = \sum\limits_{i = 1}^N {\left( {\alpha^{-1}(\sqrt { - \Delta _i  +\alpha^{-1}}-\alpha^{-1})- \frac{{Z}}
{{|x_i |}}} \right)}  + \sum\limits_{1 \leqslant i < j \leqslant N} {\frac{1}
{{|x_i  - x_j |}}}
\]
acting on the fermionic space $\mathop  \bigwedge \limits^N (L^2 (\mathbb{R}^3 ) \otimes \mathbb{C}^2)$. Here $\alpha>0$ is the {\it fine-structure constant}. It is well known that the ground state energy $E^{\text{rel}}(N,Z):=\inf \text{spec}~H_{N,Z}^{\text{rel}}$ is finite if and only if $Z\alpha \le 2/\pi$ (see e.g. \cite{LS09}). The physical value is $\alpha=e^2 /(\hbar c)\approx 1/137$ and hence $Z<87.22$. However, we allow $\alpha$ to be any positive number. 

As in the previous dicussions, we are also interested in the maximum number $N_c$ such that the ground state energy $E^{\text{rel}}(N_c,Z)$ is an eigenvalue of $H_{N_c,Z}^{\text{rel}}$. Note that Lieb's bound $N_c<2Z+1$ still holds in this case. In fact, due to a technical gap the original proof of Lieb in \cite{Li84} works properly only when $Z\alpha<1/2$. However, it is posible to fill this gap to obtain the bound up to $Z\alpha<2/\pi$ \cite{DSS}. On the other hand, to our knowledge, no result about the asymptotic behavior of $N_c/Z$ is available for the pseudo-relativistic model, although within pseudo-relativistic Hartree-Fock theory it was recently shown by Dall'Acqua and Solovej (2010) \cite{DS10} that $N_c\le Z+\text{const}$.

Our result in this section is the following.

\begin{theorem}[Bound on maximum ionization of pseudo-relativistic atoms] Let $Z>0$ such that $Z\alpha \le \kappa <2/\pi$. If $E^{{\rm rel}}(N,Z)$ is an eigenvalue of $H_{N,Z}^{{\rm rel}}$, then either $N=1$ or  
\[
N  < 1.22Z + C_\kappa Z^{1/3}
\]
for some constant $C_\kappa$ depending only on $\kappa$. The number $1.22$ can be replaced by $\beta^{-1}$ with $\beta$ being defined by (\ref{eq:def-beta}).
\end{theorem}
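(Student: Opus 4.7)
The plan is to imitate the proof of Theorem~\ref{thm:main-theorem}, replacing the non-relativistic Hardy estimate~(\ref{eq:xxDelta>=-3/4}) and the kinetic-energy Lemma~\ref{le:lemma1} by their pseudo-relativistic counterparts. First, I multiply the Schr\"odinger equation $(H_{N,Z}^{\text{rel}}-E^{\text{rel}}(N,Z))\Psi=0$ by $x_N^2\overline\Psi$ and integrate. Using $T_i^{\text{rel}}\ge 0$, the standard HVZ-type monotonicity $E^{\text{rel}}(N-1,Z)\ge E^{\text{rel}}(N,Z)$, the symmetrization of the Coulomb repulsion via $|\Psi|^2$, and the definition of $\alpha_N$, I obtain the exact analog of the inequality of Lemma~\ref{le:Z>alphaN-N}:
\[
\alpha_N(N-1)\le Z-\langle x_N^2\Psi,T_N^{\text{rel}}\Psi\rangle\,\langle\Psi,|x_N|\Psi\rangle^{-1}.
\]

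Next I derive a pseudo-relativistic replacement for~(\ref{eq:xxDelta>=-3/4}). Since the identity~(\ref{eq:identity-xxT}) does not apply to the non-local operator $\sqrt{-\Delta+\alpha^{-2}}$, I switch to Fourier variables: using $\widehat{x^2 f}(\xi)=-\Delta_\xi\hat f(\xi)$ and Plancherel, integration by parts in $\xi$ gives
\[
\operatorname{Re}\langle x^2 f,T^{\text{rel}} f\rangle = \alpha^{-1}\!\int\!\sqrt{\xi^2+\alpha^{-2}}\,|\nabla_\xi\hat f|^2\,d\xi - \frac{\alpha^{-1}}{2}\!\int\!|\hat f|^2\,\frac{2\xi^2+3\alpha^{-2}}{(\xi^2+\alpha^{-2})^{3/2}}\,d\xi - \alpha^{-2}\|xf\|^2.
\]
Retaining only $\sqrt{\xi^2+\alpha^{-2}}\ge\alpha^{-1}$ in the first integral (together with $\|\nabla_\xi\hat f\|^2=\|xf\|^2$) makes the first term absorb the rest-mass contribution $\alpha^{-2}\|xf\|^2$ exactly, while the Fourier multiplier in the second integral is pointwise dominated by $3/\sqrt{\xi^2+\alpha^{-2}}\le 3\alpha$. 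Hence $\operatorname{Re}\langle x^2 f,T^{\text{rel}} f\rangle\ge -\tfrac{3}{2}\|f\|^2$, uniformly in $\alpha$.

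Third, I establish the pseudo-relativistic analog of Lemma~\ref{le:lemma1}, namely $\langle\Psi,|x_N|\Psi\rangle\ge c_\kappa Z^{-1}N^{2/3}$. Following the structure of that proof, I combine: (i) a Bohr-atom-type energy bound $E^{\text{rel}}(N,Z)\ge -C_\kappa Z^2 N^{1/3}$, obtained via Herbst's theorem after the rescaling $x\mapsto\alpha^{-1}y$ conjugates $T^{\text{rel}}-Z/|x|$ into $\alpha^{-2}(\sqrt{-\Delta_y+1}-1-(Z\alpha)/|y|)$ whose eigenvalues $e_j(Z\alpha)$ are bounded below for $Z\alpha\le\kappa<2/\pi$ and satisfy, semiclassically, $\sum_{j\le N}e_j(Z\alpha)\ge -C_\kappa(Z\alpha)^2 N^{1/3}$, so that the factor $\alpha^{-2}$ yields $-C_\kappa Z^2 N^{1/3}$ independently of $\alpha$; (ii) a state-wise upper bound $\sum_i\langle-\Delta_i\rangle_\Psi\le C_\kappa Z^2 N^{1/3}$, derived from the operator identity $-\Delta=\alpha^2(T^{\text{rel}})^2+2T^{\text{rel}}$ together with a fixed-point closure of the Kato estimate $Z/|x|\le (\pi Z\alpha/2)T^{\text{rel}}+\pi Z/(2\alpha)$ and the energy bound of (i); (iii) the standard Lieb--Thirring inequality~(\ref{eq:LT-lower-bound-kinetic}) applied to $\sum_i\langle-\Delta_i\rangle_\Psi$, yielding $\int\rho_\Psi^{5/3}\,dx\le C_\kappa' Z^2 N^{1/3}$; and (iv) Lieb's functional inequality~(\ref{eq:|x|rho-lower-bound}), used exactly as in Lemma~\ref{le:lemma1}. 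Substituting the bounds of paragraphs two and three into the first displayed inequality gives $\alpha_N(N-1)<Z(1+C_\kappa'' N^{-2/3})$, and the elementary calculation of Section~2.3 (with $\kappa$-dependent constants in place of $0.68$ and $3$) produces $N<\beta^{-1}Z+C_\kappa Z^{1/3}$.

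The principal obstacle is step (ii) of the third part: controlling the non-relativistic kinetic energy $\sum_i\langle-\Delta_i\rangle_\Psi$ of a pseudo-relativistic ground state. The identity $-\Delta=\alpha^2(T^{\text{rel}})^2+2T^{\text{rel}}$ reduces this to a bound on $\sum_i\langle(T_i^{\text{rel}})^2\rangle_\Psi$, which in turn follows from the Schr\"odinger equation $H_{N,Z}^{\text{rel}}\Psi=E^{\text{rel}}\Psi$ by computing $\|\sum_i T_i^{\text{rel}}\Psi\|^2$ and controlling the Coulomb contributions through Hardy-type inequalities. The condition $Z\alpha\le\kappa<2/\pi$ is essential throughout: it makes the coefficient $1-\pi\kappa/2$ strictly positive, which is what allows the Kato-based fixed-point estimate to close; the constant $C_\kappa$ in the final bound inherits a blow-up as $\kappa\to 2/\pi$, reflecting the well-known collapse of the pseudo-relativistic atom at the critical coupling.
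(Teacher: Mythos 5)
Your Steps 1 and 2 are correct and close to the paper's argument. Step 1 (multiplying by $x_N^2\overline\Psi$) gives exactly the paper's inequality (\ref{eq:relativistic-0=xxuHu>=}). Step 2 is a valid Fourier computation: in the paper's normalization the multiplier in your middle term is indeed $\Delta_\xi\sqrt{\xi^2+\alpha^{-2}} = (2\xi^2+3\alpha^{-2})/(\xi^2+\alpha^{-2})^{3/2}$, your cancellation of the rest-mass term is right, and the pointwise bound by $3\alpha$ yields $\operatorname{Re}\langle x^2f,T^{\mathrm{rel}}f\rangle \ge -\tfrac32\|f\|^2$. (The paper instead pushes the gradient term through Hardy's inequality $-\Delta_p\ge 1/(4p^2)$ to obtain the sharper $-\tfrac38$, but any $O(1)$ bound suffices here.)

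Step 3 is where the proposal genuinely diverges and where the gap lies. The paper does \emph{not} adapt Lemma \ref{le:lemma1} to the relativistic setting. Instead it reuses Seiringer's trick already introduced in the magnetic case (\ref{eq:estimate-E-E}): by Jensen's inequality
\[
\langle\Psi,|x_N|\Psi\rangle^{-1}\le \langle\Psi,|x_N|^{-1}\Psi\rangle
= \frac{\langle\Psi,H^{\mathrm{rel}}_{N,Z}\Psi\rangle - \langle\Psi,H^{\mathrm{rel}}_{N,kZ}\Psi\rangle}{NZ(k-1)}
\le \frac{E^{\mathrm{rel}}(N,Z)-E^{\mathrm{rel}}(N,kZ)}{NZ(k-1)},
\]
for any $k>1$ with $kZ\alpha<2/\pi$, and then invokes $0\ge E^{\mathrm{rel}}(N,Z)\ge -C_\kappa Z^{7/3}$ (S\o rensen's Thomas--Fermi asymptotics) to conclude $\langle\Psi,|x_N|\Psi\rangle^{-1}\le C_\kappa Z^{-2/3}$. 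This completely avoids the density $\rho_\Psi$, the Lieb--Thirring inequality, and Lieb's functional inequality, so no control of the nonrelativistic kinetic energy $\sum_i\langle-\Delta_i\rangle_\Psi$ is ever needed.

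Your route, by contrast, requires exactly that control. You propose to obtain $\sum_i\langle(T_i^{\mathrm{rel}})^2\rangle_\Psi$ by computing $\|\sum_i T_i^{\mathrm{rel}}\Psi\|^2$ from the eigenvalue equation and then closing a fixed-point estimate with Kato's inequality. This is not carried out, and you yourself identify it as the ``principal obstacle.'' Squaring the eigenvalue equation produces cross-terms $\langle T_i^{\mathrm{rel}}\Psi,(\text{Coulomb})\Psi\rangle$ that must be absorbed into $\sum_i\langle(T_i^{\mathrm{rel}})^2\rangle_\Psi$ and the potential terms uniformly in $N$, and it is not clear the needed commutator/Hardy estimates close with the stated constants near the critical coupling $Z\alpha\to 2/\pi$. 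Until that argument is supplied, your step (ii) is a genuine gap, whereas the paper's difference-quotient estimate gives the required bound in three lines with no extra machinery.
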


\begin{proof} Assume that $\Psi_{N,Z}^{\text{rel}}$ is a ground state of $H_{N,Z}^{\text{rel}}$. As an analogue of (\ref{eq:0=xxuHu>=}) we get
\bq \label{eq:relativistic-0=xxuHu>=}
\alpha _N (N - 1) \leqslant Z - \left\langle {x_N^2\Psi_{N,Z}^{\text{rel}} , \alpha^{-1}(\sqrt{-\Delta_N+\alpha^{-1}}-\alpha^{-1}) \Psi_{N,Z}^{\text{rel}}} \right\rangle \left\langle {\Psi_{N,Z}^{\text{rel}} ,|x_N |\Psi_{N,Z}^{\text{rel}} } \right\rangle ^{ - 1} .
\eq

The left hand side of (\ref{eq:relativistic-0=xxuHu>=}) can be bounded using (\ref{eq:magnetic-alphaN}). Turning to the right hand side of (\ref{eq:relativistic-0=xxuHu>=}), we first show that for any function $f: \R^3\to \C$ smooth enough
\bq \label{eq:relativistic-lower-bound-xxT}
 \operatorname{Re} \left\langle {x^2 ,\alpha^{-1}\left( {\sqrt { - \Delta + \alpha^{-1}}  -\alpha^{-1}} \right)f} \right\rangle _{L^2 (\mathbb{R}^3 ,\operatorname{dx})}\ge - \frac{3}
{8}\left\langle {f,f} \right\rangle .
\eq
It suffices to show (\ref{eq:relativistic-lower-bound-xxT}) for $\alpha=1$ (the general case follows by scaling). Using the Fourier transform $\widehat f(p) := \int\limits_{\mathbb{R}^3 } {e^{ - i2\pi  p \cdot x} f(x)\d x}$ and applying (\ref{eq:identity-xxT}) to $$\varphi (p):=\sqrt{(2\pi p)^2+1}-1=\frac{(2\pi p)^2}{\sqrt{(2\pi p)^2+1}+1}$$
we find that   
\bqq
&~&\operatorname{Re} \left\langle {x^2 f,\left[ {\sqrt { - \Delta _x  + 1}  - 1} \right]f} \right\rangle _{L^2 (\mathbb{R}^3 ,\d x)}   \hfill\\
&=& (2\pi )^{ - 2} \operatorname{Re} \left\langle { - \Delta _p \widehat f,\varphi \widehat f} \right\rangle _{L^2 (\mathbb{R}^3 ,\d p)} \hfill\\
&=&(2\pi )^{ - 2}   \left\langle {\varphi ^{1/2} \widehat f,\left( { - \Delta_p  - \left| {\frac{{\nabla \varphi }}
{{2\varphi }}} \right|^2 } \right)\varphi ^{1/2} \widehat f} \right\rangle  \hfill \\
  &=&(2\pi )^{ - 2}  \left\langle {\varphi ^{1/2} \widehat f,\left( { - \Delta_p  - \frac{{(\sqrt {(2\pi p)^2  + 1}  + 1)^2 }}
{{4p^2 ((2\pi p)^2  + 1)}}} \right)\varphi ^{1/2} \widehat f} \right\rangle.
\eqq
Then it follows from Hardy's inequality $-\Delta_p\ge 1/(4p^2)$ that 
\bqq
&~& \operatorname{Re} \left\langle {x^2 f,\left[ {\sqrt { - \Delta _x  + 1}  - 1} \right]f} \right\rangle _{L^2 (\mathbb{R}^3 ,\d x)} \hfill\\
&\geqslant&  - \left\langle {\widehat f,\frac{{2\sqrt {(2\pi p)^2  + 1}  + 1}}
{{4((2\pi p)^2  + 1)(\sqrt {(2\pi p)^2  + 1}  + 1)}}\widehat f} \right\rangle_{L^2(\R^3,\d p)}  \hfill \\
   &\ge& - \frac{3}
{8}\left\langle {\widehat f,\widehat f} \right\rangle  =  - \frac{3}
{8}\left\langle {f,f} \right\rangle .
\eqq

The term $\left\langle {\Psi ,|x_N |\Psi } \right\rangle ^{ - 1}$ can be estimated similarly to (\ref{eq:estimate-E-E}), namely
$$\left\langle {\Psi ,|x_N |\Psi } \right\rangle ^{ - 1}\le \left\langle {\Psi ,|x_N |^{-1}\Psi } \right\rangle \le \frac{E(N,Z)-E(N,kZ)}{NZ(k-1)}$$
for every $k>1$ such that $kZ\alpha<2/\pi$. It is well known that $0\ge E(N,Z)\ge -C_\kappa Z^{7/3}$ provided that $Z\alpha\le \kappa$. In fact, it was shown by S\o rensen \cite{Sorensen07} that, in the limit $Z\to \infty$ (and $Z\alpha=\kappa$ fixed), the leading order of the ground-state energy $E(N,Z)$ is given by the Thomas-Fermi theory which is of order $Z^{7/3}$. Thus we can conclude that 
\bq  \label{eq:relativistic-lower-bound-<|xN|>}
\left\langle {\Psi ,|x_N |\Psi } \right\rangle ^{ - 1}\le C_\kappa Z^{-2/3}.
\eq
The desired result follows from (\ref{eq:relativistic-lower-bound-xxT}), (\ref{eq:relativistic-lower-bound-<|xN|>}), (\ref{eq:relativistic-0=xxuHu>=}) and (\ref{eq:magnetic-alphaN}). 
\end{proof}

\section{Proof of Proposition \ref{pro:alphaN}: Analysis of $\alpha_N$}

This section is devoted to the proof of Proposition \ref{pro:alphaN}. For the reader's convenience, we split the proof into several steps. Recall that $\alpha_N$ and $\beta$ are defined in (\ref{eq:def-alphaN}) and (\ref{eq:def-beta}), respectively.

\begin{step}\label{st:thm3-st1} The sequence $\alpha_N$ is increasing in $N$ and it converges to $\beta$ as $N\to \infty$. 
\end{step} 

\begin{proof} The fact that  $\alpha_N$ is increasing is shown as follows: for every $x_1,...,x_N\in \R^3$ we have
\bqq
  \sum\limits_{1 \leqslant i < j \leqslant N} {\frac{{x_i^2  + x_j^2 }}
{{|x_i  - x_j |}}} &=& \sum\limits_{k = 1}^N {\left( {\frac{1}
{{(N - 2)}}\sum\limits_{i < j;i \ne k,j \ne k} {\frac{{x_i^2  + x_j^2 }}
{{|x_i  - x_j |}}} } \right)}  \hfill \\
  &\geqslant& \sum\limits_{k = 1}^N {\left( {\alpha _{N - 1} \sum\limits_{i \ne k} {|x_i |} } \right)}  = \alpha _{N - 1} (N - 1)\sum\limits_{i = 1}^N {|x_i |},
\eqq
where we have used the definition of $\alpha_{N-1}$. This implies that $\alpha_{N-1}\le \alpha_N$. 

We shall show that $\alpha_N$ converges to $\beta$. We start with the upper bound $\alpha_N\le \beta$. Let $\rho$  be an arbitrary probability measure on $\R^3$. Then 
\bqq
  \iint\limits_{\mathbb{R}^3  \times \mathbb{R}^3 } {\frac{{x^2+y^2 }}
{{2|x - y|}}\operatorname{d\rho} (x)\operatorname{d\rho} (y)}  &=&\int\limits_{\mathbb{R}^{3N} } {\frac{1}
{{N(N - 1)}}\sum\limits_{1\le i < j \le N} {\frac{{x_i^2  + x_j^2 }}
{{|x_i  - x_j |}}} \operatorname{d\rho} (x_1 )...\operatorname{d\rho} (x_N )}  \hfill \\
   &\ge & \int\limits_{\mathbb{R}^{3N} } {\frac{{\alpha _N }}
{N}\left( {\sum\limits_{i = 1}^N {|x_i |} } \right)\operatorname{d\rho}(x_1 )...\operatorname{d\rho} (x_N )}  = \alpha _N \int\limits_{\mathbb{R}^3 } {|x|\d\rho (x)} . 
\eqq
Thus $\alpha_N\le \beta$ for all $N\ge 2$.

Let us prove a lower bound. For the reader's convenience, we give now a simple bound which is enough to get that $\alpha_N$ converges to $\beta$. We will provide a better lower bound in the next step. 

Let $\{x_i\}_{i=1}^N$ be $N$ arbitrary distinct points in $\R^3$ and let $r>0$. For our purpose we may assume that $\sum_{i=1}^N |x_i|=N$. For every $i$, let $d\mu_i$ be the uniform measure on the sphere $|x-x_i|=r_i$ with the radius $r_i:=r|x_i|$ such that $\int d\mu_i=1$. Define the probability measure $\d\rho(x):=\frac{1}{N}\sum_{i=1}^N \d\mu_i(x)$. 

Since $\int |x|\d\rho(x)\ge 1$ (due to the convexity $\int |x|\d\mu_i(x) \ge |x_i|$), we have
\[
\iint\limits_{\mathbb{R}^3  \times \mathbb{R}^3 } {\frac{{x^2 \operatorname{d\rho} (x)\operatorname{d\rho}(y)}}
{{|x - y|}}}\ge \beta.
\]

On the other hand, 
\bqq
\iint\limits_{\mathbb{R}^3  \times \mathbb{R}^3 } {\frac{{x^2 \operatorname{d\rho} (x)\operatorname{d\rho} (y)}}
{{|x - y|}}} &=& N^{-2}\sum\limits_{i,j} {\iint {\frac{{x^2 \operatorname{d\mu_i} (x)\operatorname{d\mu_j} (y)}}
{{|x - y|}}}}\hfill\\
&\leqslant& N^{-2}(1 + r)^2  \sum\limits_{i,j} {\iint {\frac{{x_i^2 \operatorname{d\mu_i} (x)\operatorname{d\mu_j} (y)}}
{{|x - y|}}}}\hfill\\
   &\leqslant& N^{-2}(1 + r)^2 \left[ {\sum\limits_{i \ne j} {\frac{{x_i^2 }}
{{|x_i  - x_j |}}}  + \frac{N}{r} } \right] .
\eqq
The first inequality follows from $|x|\le (1+r)|x_i|$ for every $x$ on the sphere $|x-x_i|=r_i$, and the second inequality is due to Newton's theorem (see, e.g. \cite{LS09}, p. 91). Thus
\[
\sum\limits_{i \ne j} {\frac{{x_i^2 }}
{{|x_i  - x_j |}}}  \geqslant (1 + r)^{ - 2} N^2 \beta - r^{-1}N.
\]
This implies that 
\bq \label{eq:first-estimate-alphaN}
\alpha _N  \geqslant \frac{N}
{{N - 1}}\left[ {(1+r)^{-2}\beta  - (rN)^{-1}}\right]~~\text{for all}~r>0.
\eq
We can choose, for example, $r=N^{-1/3}$ to conclude that $\alpha_N\to \beta$ as $N\to \infty$. This ends the proof of Step 1. 
\end{proof}

We now improve the lower bound (\ref{eq:first-estimate-alphaN}).

\begin{step}\label{st:thm3-st2} We have the lower bound
$$\alpha _N  \geqslant \frac{N}{N-1}[\beta  - 3(\beta /6)^{1/3} N^{ - 2/3} ]$$
\end{step}

\begin{proof} In fact, we shall prove that 
\bq \label{eq:second-estimate-alphaN}
\alpha _N  &\geqslant& \frac{N}
{{N - 1}}\left[ {\frac{1+r^2/3}{1+r^2}\beta  - \frac{1}{rN}}\right]\ge \frac{N}
{{N - 1}}\left[ {\beta -\frac{2r^2}{3}\beta  - \frac{1}{rN}}\right]
\eq
for all $r\in (0,1]$. The desired result follows by choosing $r=(4\beta N/3)^{-1/3}$ which maximizes the right hand side of (\ref{eq:second-estimate-alphaN}). 

The bound (\ref{eq:second-estimate-alphaN}) is shown by following the same method as for (\ref{eq:first-estimate-alphaN}), but with more careful computations. We shall prove that (with the notation of the proof of Step \ref{st:thm3-st1}), for $r\in (0,1]$, 
\bq \label{eq:improve-int-xp}
\int\limits_{\mathbb{R}^3 } {|x|\d\rho (x)}  = 1 + \frac{r^2}{3}
\eq
and 
\bq \label{eq:improve-int-int-xpp}
\iint\limits_{\mathbb{R}^3  \times \mathbb{R}^3 } {\frac{{x^2 \operatorname{d\rho} (x)\operatorname{d\rho} (y)}}
{{|x - y|}}} \le N^{-2}(1 + r^2)\left[ {\sum\limits_{i \ne j} {\frac{{x_i^2 }}
{{|x_i  - x_j |}}}  + \frac{N}{r} } \right] .
\eq

The identity (\ref{eq:improve-int-xp}) follows from a direct computation using the formula
\bqq
  \int\limits_{\mathbb{R}^3 } {f(x)\d\mu _i (x)}  &=& \frac{1}
{{|S^2 |}}\int\limits_{S^2 } {f(x_i  + r_i \omega ) d\omega}  \hfill \\
   &=& \frac{1}
{{|S^2 |}}\int\limits_0^{2\pi } {\int\limits_0^\pi  {f(x_i  + r_i (\cos \theta ,\sin \theta \cos \varphi ,\sin \theta \sin \varphi ))\sin(\theta)\d\theta \, \d\varphi } } 
\eqq
for any integrable function $f$. Here the second identity comes from the spherical coordinates $\omega=(\cos \theta ,\sin \theta \cos \varphi ,\sin \theta \sin \varphi )$, where $\theta\in [0,\pi)$ and $\varphi\in [0,2\pi]$. (Note that if $r>1$, then the left hand side of (\ref{eq:improve-int-xp}) becomes $Nr(1+1/(3r^2))$.) 

Now we prove (\ref{eq:improve-int-int-xpp}). Using Newton's theorem we have 
\bqq
\iint\limits_{\mathbb{R}^3  \times \mathbb{R}^3 } {\frac{{x^2 \operatorname{d\rho} (x)\operatorname{d\rho} (y)}}
{{|x - y|}} }&=& N^{-2}\sum\limits_{i,j} {\iint\limits_{\mathbb{R}^3  \times \mathbb{R}^3 } {\frac{{x^2 \operatorname{d\mu_i} (x)\operatorname{d\mu_j} (y)}}
{{|x - y|}}}}  \hfill \\
   &\leqslant& N^{-2} \sum\limits_{i,j} {\iint\limits_{\mathbb{R}^3  \times \mathbb{R}^3 } {\frac{{x^2 \d\mu _i (x)}}
{{|x - x_j |}}}}  = \sum\limits_{i,j} {\left[ {\iint\limits_{\mathbb{R}^3  \times \mathbb{R}^3 } {\frac{{(1 + r^2 )x_i^2 \d\mu _i (x)}}
{{|x - x_j |}}} + V_{ij} } \right]}  \hfill \\
   &\leqslant& N^{-2} (1 + r^2 )\left[ {\sum\limits_{i \ne j} {\frac{{x_i^2 }}
{{|x_i  - x_j |}} + } \frac{N}
{r}} \right] + N^{-2} \sum\limits_{i,j} {V_{ij} }
\eqq
where $V_{ii}=0$ and
\bqq
V_{ij}&=&\int\limits_{\mathbb{R}^3 } {\frac{{2x_i .(x - x_i )\operatorname{d\mu_i} (x)}}
{{|x - x_j |}}}  = \frac{1}
{{|S^2 |}}\int\limits_{S^2 } {\frac{{2x_i .r_i \omega }}
{{|x_i  - x_j  + r_i \omega |}}\operatorname{d\omega} } \hfill\\
&=&- \frac{2}
{3}\frac{{r_i x_i (x_i  - x_j )\min \{ |x_i  - x_j |,r_i \} }}
{{|x_i  - x_j |(\max \{ |x_i  - x_j |,r_i \} )^2 }}~~\text{if}~i\ne j.
\eqq
Here we have used the formula
\bq \label{eq:formula-x.y/|x-y|}
\frac{1}
{{|S^2 |}}\int\limits_{S^2 } {\frac{\omega }
{{|a + s\omega |}}d\omega }  =  - \frac{1}
{3}\frac{a}
{{|a|}}\frac{{\min \{ |a|,s\} }}
{{(\max \{ |a|,s\} )^2 }},~~a\in \R^3, s>0.
\eq

Thus (\ref{eq:improve-int-int-xpp}) will be validated if we can show that $V_{ij}+V_{ji}\ge 0$. We distinguish three cases.

{\bf Case 1:} $|x_i-x_j|\ge \max\{r_i,r_j\}$. We have
\bqq
  V_{ij}  + V_{ji}  &=&  - \frac{2}
{3}\frac{{r_i^2 x_i (x_i  - x_j )}}
{{|x_i  - x_j |^3 }} - \frac{2}
{3}\frac{{r_j^2 x_j (x_j  - x_i )}}
{{|x_i  - x_j |^3 }} \hfill \\
      &=&  - \frac{{r^2 }}
{3}\frac{{(x_i^2  - x_j^2 )^2  + (x_i^2  + x_j^2 )(x_i  - x_j )^2 }}
{{|x_i  - x_j |^3 }} \leqslant  0.
\eqq

{\bf Case 2:} $|x_i-x_j|\le \min \{r_i,r_j\}$. In this case
$$
V_{ij}  + V_{ji} =  - \frac{2}
{3}\frac{{x_i (x_i  - x_j )}}
{{r_i }} - \frac{2}
{3}\frac{{x_j (x_j  - x_i )}}
{{r_j }}=-\frac{2}{3}\frac{|x_i|+|x_j|}{r}\left( {1-\frac{x_ix_j}{|x_i|.|x_j|}} \right)\leqslant 0.
$$

{\bf Case 3:} $r_i\le |x_i-x_j|\le r_j$ (the case $r_j\le |x_i-x_j|\le r_i$ is similar). We have 
\bqq
  V_{ij}  + V_{ji}  =  - \frac{2}
{3}\frac{{r_i^2 x_i (x_i  - x_j )}}
{{|x_i  - x_j |^3 }} - \frac{2}
{3}\frac{{x_j (x_j  - x_i )}}
{{r_j }} . 
\eqq
It is obvious that $V_{ji}\le 0$ since $|x_j|\ge |x_i|$. If $V_{ij}\le 0$ then we are done; if $V_{ij}\ge 0$, then using $r_i\le |x_i-x_j|$ we get
$$
V_{ij} \le  - \frac{2}
{3}\frac{{x_i (x_i  - x_j )}}
{r_i} .
$$
It turns out that $V_{ij}+V_{ji}\ge 0$ as in Case 2. 
\end{proof}

We now turn to direct bounds on $\beta$. 

\begin{step}\label{st:thm3-st3} We have the bound $0.8218\le \beta \le 0.8705$.
\end{step}
\begin{proof} The lower bound follows from the following estimate whose proof will be provided later. 
\begin{lemma} \label{le:lower-bound-(xx+yy)/d}For any positive measure $\rho$ on $\R^3$ we have
\bqq
  &~&\iint\limits_{\mathbb{R}^3  \times \mathbb{R}^3 } {\frac{{x^2  + y^2 }}
{{|x - y|}} \operatorname{d \rho} (x)\operatorname{d\rho}(y)}\hfill\\
 &\ge& \max \left\{ {~~\iint\limits_{\mathbb{R}^3  \times \mathbb{R}^3 } {\left( {\max \{ |x|,|y|\}  + \frac{{(\min \{ |x|,|y|\} )^2 }}
{{|x - y|}}} \right)\operatorname{d \rho}(x)\operatorname{d\rho}(y)}}, \right. \hfill \\
   &~&~~~~~~~~~~ \left. {\iint\limits_{\mathbb{R}^3  \times \mathbb{R}^3 } {\left( {|x - y| + \frac{2}
{3}\frac{{(\min \{ |x|,|y|\} )^2 }}
{{\max \{ |x|,|y|\} }}} \right)\operatorname{d\rho}(x)\operatorname{d \rho}(y)}} \right\}.
\eqq
\end{lemma}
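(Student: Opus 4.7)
The plan is to prove each of the two lower bounds separately, via a multipole (Legendre-polynomial) expansion of $1/|x-y|$ combined with the spherical-harmonic addition theorem. Both inequalities become equalities when $\rho$ is rotation invariant about the origin, so the task reduces to showing that the non-isotropic corrections to the integrand $(x^2+y^2)/|x-y|$ contribute non-negatively when integrated against $d\rho\otimes d\rho$.

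Writing $r_<=\min\{|x|,|y|\}$ and $r_>=\max\{|x|,|y|\}$, the identity $r_<^2+r_>^2=|x|^2+|y|^2$ reduces the first bound to $\iint (r_>^2/|x-y|)\,d\rho\,d\rho\geq\iint r_>\,d\rho\,d\rho$. Expanding $1/|x-y|=\sum_{\ell\geq 0}(r_<^\ell/r_>^{\ell+1})P_\ell(\hat x\cdot\hat y)$, the $\ell=0$ contribution to $r_>^2/|x-y|$ is precisely $r_>$, so the first bound becomes
\[
\sum_{\ell\geq 1}\iint \frac{r_<^\ell}{r_>^{\ell-1}}\,P_\ell(\hat x\cdot\hat y)\,d\rho(x)\,d\rho(y)\;\geq\;0.
\]
For the second bound I use $(x^2+y^2)/|x-y|=|x-y|+2(x\cdot y)/|x-y|$ to reduce to $\iint (x\cdot y)/|x-y|\,d\rho\,d\rho\geq\tfrac{1}{3}\iint r_<^2/r_>\,d\rho\,d\rho$, then substitute $x\cdot y=r_<r_>(\hat x\cdot\hat y)$ and apply the Legendre recurrence $(2\ell+1)uP_\ell(u)=(\ell+1)P_{\ell+1}(u)+\ell P_{\ell-1}(u)$. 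The isotropic part of $(x\cdot y)/|x-y|$ then works out to $r_<^2/(3r_>)$, and the remainder is $\sum_{\ell\geq 1}c_\ell(r_<,r_>)P_\ell(\hat x\cdot\hat y)$ with
\[
c_\ell(r_<,r_>)=\tfrac{\ell}{2\ell-1}\,\tfrac{r_<^\ell}{r_>^{\ell-1}}+\tfrac{\ell+1}{2\ell+3}\,\tfrac{r_<^{\ell+2}}{r_>^{\ell+1}}\;\geq\;0.
\]

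Both bounds therefore come down to a single claim: for all integers $k\geq 1$ and $\ell\geq 1$,
\[
\iint \frac{r_<^k}{r_>^{k-1}}\,P_\ell(\hat x\cdot\hat y)\,d\rho(x)\,d\rho(y)\;\geq\;0.
\]
To verify this I insert the layer-cake identities $r_<^k=\int_0^\infty kt^{k-1}\mathbf{1}[t\leq|x|]\mathbf{1}[t\leq|y|]\,dt$ and (for $k\geq 2$) $1/r_>^{k-1}=\int_0^\infty (k-1)s^{-k}\mathbf{1}[s\geq|x|]\mathbf{1}[s\geq|y|]\,ds$, together with the addition theorem $P_\ell(\hat x\cdot\hat y)=\tfrac{4\pi}{2\ell+1}\sum_m Y_{\ell m}(\hat x)\overline{Y_{\ell m}(\hat y)}$. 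Fubini then expresses the integral as a non-negative double integral in $(t,s)$ of the manifestly non-negative quantity $\sum_m\bigl|\int_{t\leq|x|\leq s}Y_{\ell m}(\hat x)\,d\rho(x)\bigr|^2$.

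The main obstacle is bookkeeping: correctly tracking the radial coefficients $c_\ell$ produced by the Legendre recurrence in the second bound and justifying the termwise interchange of the series $\sum_\ell$ with the double integral. The latter is routine because the Legendre expansion converges absolutely off the measure-zero set $\{|x|=|y|\}$, so a truncation-plus-Fatou argument suffices.
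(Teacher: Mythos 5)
Your decomposition is correct and your key lemma is a nice observation. For the second inequality your route is essentially the paper's argument in Fourier-on-the-sphere disguise: the paper averages $\rho$ over $\mathrm{SO}(3)$ and uses positive definiteness of the kernel $x\cdot y/|x-y|$; you project onto Legendre modes and show the non-radial modes contribute non-negatively, which is the same fact. For the first inequality, however, you take a genuinely different and more self-contained path. The paper invokes Theorem~3.1 of Lieb--Sigal--Simon--Thirring, a nontrivial discrete "no very negative ions" estimate, passes through the discrete inequality (\ref{eq:(xx+yy)/d>=M+mm/d-discrete}) with an $\varepsilon$-loss, and then averages. You instead observe that every Legendre mode $\ell\ge 1$ of $r_>^2/|x-y|$ integrates to something $\ge 0$ against $d\rho\otimes d\rho$, which you verify cleanly via the layer-cake representation of $r_<^k$ and $r_>^{-(k-1)}$ together with the addition theorem, producing a manifestly non-negative integrand $\sum_m\bigl|\int_{t\le|x|\le s}Y_{\ell m}(\hat x)\,d\rho(x)\bigr|^2$. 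That mode-by-mode positivity is stronger and cleaner than what the LSST88 route gives, and the Fubini step inside each fixed mode is fine (since $|P_\ell|\le 1$ and $\iint r_<\,d\rho\,d\rho<\infty$ whenever the left-hand side is finite).

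The gap is in the final interchange of $\sum_{\ell\ge 1}$ with $\iint d\rho\,d\rho$. You claim "truncation-plus-Fatou suffices," but Fatou runs the wrong way here: with $S_L:=\sum_{\ell=1}^L \frac{r_<^\ell}{r_>^{\ell-1}}P_\ell(\hat x\cdot\hat y)\to f:=\frac{r_>^2}{|x-y|}-r_>$ a.e.\ and $\iint S_L\,d\rho\,d\rho\ge 0$ for each $L$, Fatou only gives $\iint f\le\liminf_L\iint S_L$, an upper bound on $\iint f$, which is useless. To conclude $\iint f\ge 0$ you need reverse Fatou, i.e.\ a $\rho\otimes\rho$-integrable dominating function $h$ with $S_L\le h$ for all $L$. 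That is exactly what is missing: the partial sums of the Legendre generating function are not uniformly bounded (already $\sum_{\ell=0}^L P_\ell(1)=L+1$), and your candidate bound $|S_L|\le r_<r_>/(r_>-r_<)$ is not controlled by $\frac{x^2+y^2}{|x-y|}$ when $x,y$ are nearly antipodal with comparable norms. You will need an actual argument here --- Abel summation with a careful $q\uparrow 1$ limit, Cesàro/Fejér means for Legendre series, or a density argument that first restricts $\rho$ to sets where $r_</r_>$ is bounded away from $1$ and then passes to the limit. The paper's approach sidesteps this summability issue entirely, so if you keep the Legendre route you owe the reader this step.
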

\begin{remark} If $\rho$ is radially symmetric, then three terms in Lemma \ref{le:lower-bound-(xx+yy)/d} are equal.
\end{remark} 
It follows from Lemma \ref{le:lower-bound-(xx+yy)/d} that for any positive measure $\rho$ on $\R^3$ and for any $\lambda\in [0,1]$ we have 
\[
\iint\limits_{\mathbb{R}^3  \times \mathbb{R}^3 } {\frac{{x^2  + y^2 }}
{{|x - y|}}\operatorname{d \rho}(x)\operatorname{d\rho}(y)} \geqslant \iint\limits_{\mathbb{R}^3  \times \mathbb{R}^3 } {W_\lambda  (x,y)\operatorname{d\rho}(x)\operatorname{d \rho}(y)}
\]
where
\bq \label{eq:def-Wlambda}
W_\lambda  (x,y) &:=& \lambda \left( {\max \{ |x|,|y|\}  + \frac{{(\min \{ |x|,|y|\} )^2 }}
{{|x - y|}}} \right) \nn\hfill\\
&~&~+ (1 - \lambda )\left( {|x - y| + \frac{2}
{3}\frac{{(\min \{ |x|,|y|\} )^2 }}
{{\max \{ |x|,|y|\} }}} \right).
\eq
It turns out that
$$
\beta  \geqslant\mathop {\sup }\limits_{\lambda  \in [0,1]} \mathop {\inf }\limits_{x,y \in \mathbb{R}^3 } \frac{{W_\lambda  (x,y)}}
{{|x| + |y|}}.
$$
Thus the lower bound on $\beta$ follows from the following lemma whose proof is provided in the  Appendix.

\begin{lemma}\label{le:lemma7} With $W_\lambda$ being defined in (\ref{eq:def-Wlambda}) one has
$$\mathop {\sup }\limits_{\lambda  \in [0,1]} \mathop {\inf }\limits_{x,y \in \mathbb{R}^3 } \frac{{W_\lambda  (x,y)}}
{{|x| + |y|}} > 0.8218.$$
\end{lemma}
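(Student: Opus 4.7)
The plan is to exhibit a single value $\lambda_*\in(0,1)$ and to verify that the corresponding two-variable minimisation problem has infimum strictly greater than $0.8218$. The first observation is that $W_\lambda(x,y)/(|x|+|y|)$ depends on $x,y$ only through the three scalars $|x|$, $|y|$, $|x-y|$, and is homogeneous of degree $0$ in $(x,y)$. Writing $a=\max\{|x|,|y|\}$, $b=\min\{|x|,|y|\}$ and setting $t=b/a\in[0,1]$, $s=|x-y|/a\in[|1-t|,1+t]$, this reduces the problem to minimising, over the triangle
\[
\Delta=\{(t,s):0\le t\le 1,\ 1-t\le s\le 1+t\},
\]
the explicit function
\[
G_\lambda(t,s)=\frac{1}{1+t}\left[\lambda\!\left(1+\frac{t^2}{s}\right)+(1-\lambda)\!\left(s+\frac{2t^2}{3}\right)\right].
\]

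Next I carry out the inner minimisation in $s$ for each fixed $t$. Since $\partial_s G_\lambda=[(1-\lambda)-\lambda t^2/s^2]/(1+t)$, the unique interior critical point is $s_*(t,\lambda)=t\sqrt{\lambda/(1-\lambda)}$, and $G_\lambda(t,\cdot)$ is convex in $s$. Hence $\min_s G_\lambda(t,s)$ is attained at $s_*$ when $s_*\in[1-t,1+t]$, and otherwise at the nearer endpoint; a short check shows that the relevant endpoint is always $s=1-t$ (the lower endpoint), since $G_\lambda(t,1+t)\ge G_\lambda(t,1-t)$ when both lie on the same side of $s_*$. This splits $[0,1]$ into (at most) two intervals in $t$: an interior regime where $\min_s G_\lambda=\lambda+2t\sqrt{\lambda(1-\lambda)}+\tfrac{2(1-\lambda)t^2}{3}$ divided by $1+t$, and an endpoint regime where $\min_s G_\lambda=G_\lambda(t,1-t)$. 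On each regime the minimiser $t$ can be located by an elementary one-variable calculus argument.

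Finally I optimise the outer parameter $\lambda$. For the bounds derived in the two regimes, both lower bounds on $\inf_{t}\min_s G_\lambda$ are smooth, explicit functions of $\lambda$, and one selects $\lambda_*$ so as to balance the worst cases from the two regimes (concretely, solving for $\lambda$ so that the value at the interior minimiser equals the value at the endpoint minimiser). With this choice of $\lambda_*$ one obtains closed-form lower bounds that can be evaluated to arbitrary precision, and the claim is verified by showing both bounds exceed $0.8218$. The main obstacle is the bookkeeping: the inner minimiser $s_*$ may leave the admissible range $[1-t,1+t]$ as $t$ varies, so one must split the $t$-interval correctly and confirm that none of the boundary cases drops below the target value. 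Once this case analysis is in place, the remaining estimates are elementary univariate inequalities that can be verified directly (for instance by bounding $\sqrt{\lambda(1-\lambda)}$ below and $\max\{1/(1+t),(1-t)/(1+t)\}$-type factors by simple rational expressions).
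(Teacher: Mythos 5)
Your approach is genuinely different from the paper's. The paper avoids any case analysis over the triangle-inequality constraint by using the elementary bound $u^2+v^2\ge 2uv$ twice: writing $a=\max\{|x|,|y|\}$, $b=\min\{|x|,|y|\}$, $c=|x-y|$ and introducing a free parameter $\lambda'\in[0,\lambda]$, it splits $W_\lambda = (\lambda-\lambda')a + (\lambda' a + (1-\lambda)\tfrac{2b^2}{3a}) + (\lambda\tfrac{b^2}{c}+(1-\lambda)c)$ and applies AM--GM to the last two groups to obtain
\[
W_\lambda(x,y)\ \ge\ (\lambda-\lambda')a + \Bigl(2\sqrt{\tfrac23\lambda'(1-\lambda)} + 2\sqrt{\lambda(1-\lambda)}\Bigr)b,
\]
then chooses $\lambda'$ so that the two coefficients coincide, yielding $W_\lambda\ge g(\lambda)(|x|+|y|)$ directly; one then just evaluates $g(0.843)>0.8218$. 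Your plan instead computes the minimum over $s=c/a$ exactly and then optimizes over $t=b/a$; carried through correctly this is a tighter strategy and would recover the numerically observed value $0.8218066\ldots$ rather than a weaker AM--GM lower bound.

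There is, however, a concrete error in your case analysis. You claim that whenever the interior critical point $s_*(t,\lambda)=t\sqrt{\lambda/(1-\lambda)}$ leaves the admissible interval $[1-t,1+t]$ the constrained minimiser is ``always'' the lower endpoint $s=1-t$. This is false. For $\lambda$ near the optimal $\lambda_*\approx 0.843$ one has $\sqrt{\lambda/(1-\lambda)}\approx 2.32>1$, so for $t$ close to $1$ (precisely $t>(\sqrt{\lambda/(1-\lambda)}-1)^{-1}\approx 0.76$) one has $s_*>1+t$. Since $G_\lambda(t,\cdot)$ is convex with minimum at $s_*$, the minimiser on $[1-t,1+t]$ is then the \emph{nearer} endpoint, which is $s=1+t$, not $s=1-t$; your justification (``$G_\lambda(t,1+t)\ge G_\lambda(t,1-t)$ when both lie on the same side of $s_*$'') runs exactly backwards in this regime, and is moreover in tension with your own ``nearer endpoint'' criterion stated a clause earlier. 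A quick check shows $G_{0.843}(1,2)\approx 0.84>0.8218$, so the final conclusion very likely survives, but the $s=1+t$ branch (and the full range of $t$ on which it is active) must be analysed explicitly before the argument is complete. The paper's AM--GM route sidesteps this bookkeeping entirely, which is precisely why it is preferable despite giving a less sharp pointwise bound.
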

(A numerical computation shows that left-hand side of the inequality in Lemma \ref{le:lemma7} is equal to $0.8218066...$)

The upper bound on $\beta$ is attained by choosing some explicit trial measure $\rho$. By restricting $\rho$ to radially symmetric measures we have 
\[
\beta\le \beta _{\text{rad}} : = \inf \left\{ {\frac{{\int\limits_0^\infty  {\int\limits_0^\infty  {\frac{{r^2 {\rm d}m(r){\rm d}m(s)}}
{{\max \{ r,s\} }}} } }}
{{\int\limits_0^\infty  {r {\rm d}m(r)}  }}:m {\text{ a probability measure on}~ [0,\infty)}} \right\}.
\]
Choosing $m(r)=\frac{3}{4} r^{-3/2}1_{[1,9]}(r)dr$, with $dr$ being the Lebesgue measure, we get
\[
\beta _{\operatorname{rad} }  \leqslant \frac{{115}}
{{81}} - \frac{1}
{2}\ln (3) = 0.8704...
\]
(A numerical computation shows that $\beta _{\text{rad}}$ is approximately $0.8702$.)
\end{proof}

For completeness, we prove Lemma \ref{le:lower-bound-(xx+yy)/d}.

\begin{proof}[Proof of Lemma \ref{le:lower-bound-(xx+yy)/d}] We start by proving 
\bq \label{eq:(xx+yy)/d>=M+mm/d}
\iint{\frac{{x^2  + y^2 }}
{{|x - y|}}\operatorname{d \rho} (x)\operatorname{d \rho} (y)} \geqslant \iint{\left( {\max \{ |x|,|y|\}  + \frac{{(\min \{ |x|,|y|\} )^2 }}
{{|x - y|}}} \right)\operatorname{d \rho} (x)\operatorname{d \rho} (y)}.
\eq

We first show that (\ref{eq:(xx+yy)/d>=M+mm/d}) follows from the following inequality: for any $\eps>0$, if $N$ is large enough, then 
\bq \label{eq:(xx+yy)/d>=M+mm/d-discrete}
\sum\limits_{1 \leqslant i < j \leqslant N} {\frac{{x_i^2  + x_j^2 }}
{{|x_i  - x_j |}}}  \geqslant (1 - \varepsilon )\sum\limits_{1 \leqslant i < j \leqslant N} {\left( {\max \{ |x_i |,|x_j |\}  + \frac{{\min \{ |x_i |,|x_j |\} }}
{{|x_i  - x_j |}}} \right)} 
\eq
for every $\{x_i\}_{i=1}^N\subset \R^3$. In fact, we may assume that $\rho(\R^3)=1$. For every $\eps>0$, taking $N$ large enough and using (\ref{eq:(xx+yy)/d>=M+mm/d-discrete}) one has
\bqq
&~&\iint\limits_{\mathbb{R}^3  \times \mathbb{R}^3 } {\frac{{x^2  + y^2 }}
{{|x - y|}}\operatorname{d} \rho (x)\operatorname{d} \rho (y)}\hfill\\
&=&\int\limits_{\mathbb{R}^{3N} } {\frac{2}
{{N(N - 1)}}\left( {\sum\limits_{1 \leqslant i < j \leqslant N} {\frac{{x_i^2  + x_j^2 }}
{{|x_i  - x_j |}}} } \right)\operatorname{d \rho} (x_1 )...\operatorname{d \rho} (x_N )}  \hfill \\
   &\geqslant& \int\limits_{\mathbb{R}^{3N} } {\frac{{2(1 - \varepsilon )}}
{{N(N - 1)}}\sum\limits_{1 \leqslant i < j \leqslant N} {\left( {\max \{ |x_i |,|x_j |\}  + \frac{{\min \{ |x_i |,|x_j |\} }}
{{|x_i  - x_j |}}} \right)} \operatorname{d \rho} (x_1 )...\operatorname{d \rho} (x_N )}  \hfill \\
   &=& (1 - \varepsilon )\iint\limits_{\mathbb{R}^3  \times \mathbb{R}^3 } {\left( {\max \{ |x|,|y|\}  + \frac{{\min \{ |x|,|y|\} }}
{{|x - y|}}} \right)\operatorname{d \rho} (x)\operatorname{d\rho} (y)}.
\eqq
Since the latter inequality holds for every $\eps>0$, the inequality (\ref{eq:(xx+yy)/d>=M+mm/d}) follows. 

Now we show (\ref{eq:(xx+yy)/d>=M+mm/d-discrete}). This inequality follows from a key result of \cite{LSST88}. It was shown in \cite{LSST88} (Theorem 3.1) that, for any $\eps>0$, if $N$ is large enough, then 
\bq \label{eq:key-estimate-LSST88}
\mathop {\max }\limits_{1\le j\le N} \left\{ {\sum\limits_{1 \leqslant i \leqslant N,i \ne j} {\frac{1}
{{|x_i  - x_j |}}}  - \frac{{N(1 - \varepsilon )}}
{{|x_j |}}} \right\} \ge 0
\eq
for any $\{x_i\}_{i=1}^N \subset \R^3$. Since 
\[
\max \{ |x_i |,|x_j |\}  + \frac{{(\min \{ |x_i |,|x_j |\} )^2 }}
{{|x_i  - x_j |}} \leqslant \min \{ |x_i |,|x_j |\}  + \frac{{(\max \{ |x_i |,|x_j |\} )^2 }}
{{|x_i  - x_j |}},
\]
we can deduce from (\ref{eq:key-estimate-LSST88}) that
\bq \label{eq:key-estimate-LSST88-b}
\mathop {\max }\limits_{1 \leqslant j \leqslant N} \left\{ {\sum\limits_{i \ne j} {\left[ {\frac{{x_i^2  + x_j^2 }}
{{|x_i  - x_j |}} - (1 - \varepsilon )\left( {\max \{ |x_i |,|x_j |\}  + \frac{{(\min \{ |x_i |,|x_j |\} )^2 }}
{{|x_i  - x_j |}}} \right)} \right]} } \right\}\ge 0. 
\eq

Now take $1>\eps>0$. For $N$ large enough, employing (\ref{eq:key-estimate-LSST88-b}) repeatedly, we can assume that 
\[
\sum\limits_{1 \leqslant i < j} {\left[ {\frac{{x_i^2  + x_j^2 }}
{{|x_i  - x_j |}} - (1 - \varepsilon )\left( {\max \{ |x_i |,|x_j |\}  + \frac{{(\min \{ |x_i |,|x_j |\} )^2 }}
{{|x_i  - x_j |}}} \right)} \right]}  \geqslant 0
\]
for every $\eps N\le j\le N$. It turns out that
\[
\begin{gathered}
  \sum\limits_{1 \leqslant i < j \leqslant N} {\left[ {\frac{{x_i^2  + x_j^2 }}
{{|x_i  - x_j |}} - (1 - \varepsilon )\left( {\max \{ |x_i |,|x_j |\}  + \frac{{(\min \{ |x_i |,|x_j |\} )^2 }}
{{|x_i  - x_j |}}} \right)} \right]}  \hfill \\
   \geqslant \sum\limits_{1 \leqslant i < j < \varepsilon N} {\left[ {\frac{{x_i^2  + x_j^2 }}
{{|x_i  - x_j |}} - (1 - \varepsilon )\left( {\max \{ |x_i |,|x_j |\}  + \frac{{(\min \{ |x_i |,|x_j |\} )^2 }}
{{|x_i  - x_j |}}} \right)} \right]}  \hfill \\
   \geqslant  - \sum\limits_{1 \leqslant i < j < \varepsilon N} {\max \{ |x_i |,|x_j |\} }  \geqslant  - \varepsilon N\sum\limits_{1 \leqslant i < \varepsilon N} {|x_i |}  \geqslant  - \frac{\varepsilon }
{{1 - \varepsilon }}\sum\limits_{1 \leqslant i < \varepsilon N \leqslant j \leqslant N} {|x_i |}  \hfill \\
   \geqslant  - \frac{\varepsilon }
{{1 - \varepsilon }}\sum\limits_{1 \leqslant i < j \leqslant N} {\left( {\max \{ |x_i |,|x_j |\}  + \frac{{(\min \{ |x_i |,|x_j |\} )^2 }}
{{|x_i  - x_j |}}} \right)}.  \hfill \\ 
\end{gathered} 
\]
Thus (\ref{eq:(xx+yy)/d>=M+mm/d-discrete}), and hence (\ref{eq:(xx+yy)/d>=M+mm/d}), follows.  

Next, we show that 
\bq \label{eq:x.y/|x-y|>=1/3*minmin/max}
\iint\limits_{\mathbb{R}^3  \times \mathbb{R}^3 } {\frac{{x^2  + y^2 }}
{{|x - y|}}\operatorname{d} \rho (x)\operatorname{d \rho} (y)} \geqslant \iint\limits_{\mathbb{R}^3  \times \mathbb{R}^3 } {\left( {|x - y| + \frac{2}
{3}\frac{{(\min \{ |x|,|y|\} )^2 }}
{{\max \{ |x|,|y|\} }}} \right)\operatorname{d \rho} (x)\operatorname{d \rho} (y)}.
\eq
This is equivalent to
\bq \label{eq:x.y/|x-y|>=1/3*minmin/max-b}
\iint\limits_{\mathbb{R}^3  \times \mathbb{R}^3 } {\frac{{x \cdot y}}
{{|x - y|}}}\operatorname{d \rho} (x)\operatorname{d \rho} (y) \geqslant \frac{1}
{3}\iint\limits_{\mathbb{R}^3  \times \mathbb{R}^3 } {\frac{{(\min \{ |x|,|y|\} )^2 }}
{{\max \{ |x|,|y|\} }}\operatorname{d \rho} (x)\operatorname{d \rho} (y)}.
\eq
In fact, if $\rho$ is radially symmetric, then (\ref{eq:x.y/|x-y|>=1/3*minmin/max}) becomes an equality due to (\ref{eq:formula-x.y/|x-y|}). In the general case, let us introduce the positive,  radially symmetric measure
\[
\widetilde\rho (x) = \int\limits_{\operatorname{SO}(3)} {\rho (Rx)\d R}, 
\]
$\d R$ being the normalized Haar measure on the rotation group $\operatorname{SO}(3)$. Because of the positive-definiteness of the operator with the kernel $\frac{{x \cdot y}}
{{|x - y|}}$, we can employ the convexity to get 
\bqq
 \iint\limits_{\mathbb{R}^3  \times \mathbb{R}^3 } {\frac{{x \cdot y}}
{{|x - y|}}\operatorname{d \rho} (x)\operatorname{d \rho} (y)} &\ge &  \iint\limits_{\mathbb{R}^3  \times \mathbb{R}^3 } {\frac{{x \cdot y}}
{{|x - y|}}\operatorname{d \widetilde\rho} (x)\operatorname{d \widetilde\rho} (y)} \hfill \\
  &=&\frac{1}
{3}\iint\limits_{\mathbb{R}^3  \times \mathbb{R}^3 } {\frac{{(\min \{ |x|,|y|\} )^2 }}
{{\max \{ |x|,|y|\} }}\operatorname{d \widetilde\rho} (x)\operatorname{d \widetilde\rho} (y)} 
  \hfill\\
& = &\frac{1}
{3}\iint\limits_{\mathbb{R}^3  \times \mathbb{R}^3 } {\frac{{(\min \{ |x|,|y|\} )^2 }}
{{\max \{ |x|,|y|\} }}\operatorname{d\rho} (x)\operatorname{d \rho} (y)}.
\eqq
Thus (\ref{eq:x.y/|x-y|>=1/3*minmin/max-b}) (and hence (\ref{eq:x.y/|x-y|>=1/3*minmin/max})) holds for all positive measures $\rho$.
\end{proof}

\begin{appendix}
\text{}\\
{\bf \large Appendix: Technical lemmas}

In this appendix we provide the proofs of some technical lemmas.

\begin{proof}[Proof of Lemma \ref{le:lemma3}] Let us denote $\beta_1:=3(\beta /6)^{1/3}$ for short. If the desired inequality fails, then $\beta^{-1}<N/Z< 7/3$ and 
\bqq
0 &\leqslant& 1 +0.68~N^{-2/3}- (\beta ^{ - 1}  + 3Z^{-2/3})(\beta- \beta_1 N^{-2/3}) \hfill\\
&~&~+\beta _1N^{-2/3}(N/Z-\beta ^{ - 1}-3Z^{-2/3})\hfill\\
&=& N^{-2/3}\left[ {0.68- 3\beta (N/Z)^{2/3}+ \beta _1(N/Z)} \right].
\eqq

Thus the polynomial 
\[
h(x):=0.68-3\beta x^2+ \beta_1 x^3
\]
satisfies that $h(-\infty)=-\infty$, $h(0)=0.68>0$, $h(\beta^{-1/3})<0$, $h((N/Z)^{1/3})\ge 0$, $h((7/3)^{1/3})<0$ and $h(+\infty)=+\infty$ (to verify that $h(\beta^{-1/3})<0$ and $h((7/3)^{1/3})<0$ we need to use $\beta\ge 0.8218$). This implies that $h(x)$ has more than three distinct roots, which is a contradiction.
\end{proof}

\begin{proof}[Proof of Lemma \ref{le:lemma4}] Denote $\beta_1:=3(\beta/6)^{1/3}$. Assume that the desired inequality fails, namely  
\bq \label{eq:magnetic-alphaN-contradiction}
3\beta Z^{ - 2/3}  \leqslant \beta _1 N^{ - 2/3} (\beta ^{ - 1}  + 3Z^{ - 2/3} ).
\eq

Replacing the term $\beta ^{ - 1}  + 3Z^{ - 2/3}$ in the right hand side of (\ref{eq:magnetic-alphaN-contradiction}) we get
\[
\frac{N}
{Z} \geqslant \left( {\frac{{3\beta }}
{{\beta _1 }}} \right)^3  > 4
\]
since $\beta\ge 0.8218$. Thus $N\ge \max \{4Z, \beta^{-1}+3Z^{-2/3}\} > 4$. 

On the other hand, (\ref{eq:magnetic-alphaN-contradiction}) is equivalent to
\[
N^{ - 2/3}  \geqslant \frac{\beta }
{{\beta _1 }} - \frac{1}
{{3\beta (N/Z)^{2/3} }}.
\]
Using $\beta\ge 0.8218$ and $N/Z>4$ we have $N<4.5$. It contradicts the fact that $N$ must be an integer. 
\end{proof}

\begin{proof}[Proof of Lemma \ref{le:lemma7}] For any $x,y\in \R^3$, denote $a=\max\{|x|,|y|\}$, $b=\min\{|x|,|y|\}$ and $c=|x-y|$. Using the inequality $u^2+v^2\ge 2uv$ for $u,v\ge 0$, we find that
\bqq
  W_\lambda  (x,y) &=& \lambda \left( {a + \frac{{b^2 }}
{c}} \right) + (1 - \lambda )\left( {c + \frac{{2b^2 }}
{{3a}}} \right) \hfill \\
   &=& (\lambda  - \lambda ')a + \left( {\lambda 'a + (1 - \lambda )\frac{{2b^2 }}
{{3a}}} \right) + \left( {\lambda \frac{{b^2 }}
{c} + (1 - \lambda )c} \right) \hfill \\
   &\geqslant& (\lambda  - \lambda ')a + \left( {2\sqrt {\frac{2}
{3}\lambda '(1 - \lambda )}  + 2\sqrt {\lambda (1 - \lambda )} } \right)b
\eqq
for every $0\le \lambda'\le \lambda$. We may choose $\lambda'$ such that
\bq \label{eq:lambda'-in-term-lambda}
\lambda  - \lambda ' = 2\sqrt {\frac{2}
{3}\lambda '(1 - \lambda )}  + 2\sqrt {\lambda (1 - \lambda )} .
\eq
If $\lambda\ge 0.8$ the solution to (\ref{eq:lambda'-in-term-lambda}) is 
\[
\lambda ' = \left( {\sqrt {\frac{{\lambda  + 2}}
{3} - 2\sqrt {\lambda (1 - \lambda )} }  - \sqrt {\frac{2}
{3}(1 - \lambda )} } \right)^2 .
\]
Thus, for every $x,y\in \R^3$,
\[
\frac{{W_\lambda  (x,y)}}
{{|x| + |y|}} \geqslant g(\lambda ): = \lambda  - \lambda ' = \lambda  - \left( {\sqrt {\frac{{\lambda  + 2}}
{3} - 2\sqrt {\lambda (1 - \lambda )} }  - \sqrt {\frac{2}
{3}(1 - \lambda )} } \right)^2 .
\]
The desired lower bound comes from $g(0.843)= 0.821804...$ (A numerical computation shows that $g(\lambda)$ has a unique maximum at $\lambda_0= 0.843476...$ and $g_{\text{max}}=0.8218066...$).
\end{proof}
\end{appendix}

\end{document}